\newcommand{\blind}{0}
\renewcommand{\baselinestretch}{1}
\newtheorem{proposition}{Proposition}[]
\algrenewcommand\algorithmicrequire{\textbf{Input:}}
\algrenewcommand\algorithmicensure{\textbf{Output:}}
\def\ds{\displaystyle}
\begin{document}








\def\spacingset#1{\renewcommand{\baselinestretch}%
{#1}\small\normalsize} \spacingset{1}


\if0\blind
{
  \title{\bf Exact Bayesian inference for level-set Cox processes with piecewise constant intensity function}
  \author{Fl\'{a}vio B. Gon\c{c}alves and B\'{a}rbara C. C. Dias \\
    \\
    Universidade Federal de Minas Gerais, Brazil}
    \date{}
  \maketitle
} \fi

\if1\blind
{
  \bigskip
  \bigskip
  \bigskip
  \begin{center}
    {\LARGE\bf Exact Bayesian inference for level-set Cox processes with piecewise constant intensity function}
\end{center}
  \medskip
} \fi

\bigskip

\begin{abstract}

This paper proposes a new methodology to perform Bayesian inference for a class of multidimensional Cox processes in which the intensity function is piecewise constant. Poisson processes with piecewise constant intensity functions are believed to be suitable to model a variety of point process phenomena and, given its simpler structure, are expected to provide more precise inference when compared to processes with non-parametric and continuously varying intensity functions. The partition of the space domain is flexibly determined by a level-set function of a latent Gaussian process. Despite the intractability of the likelihood function and the infinite dimensionality of the parameter space, inference is performed exactly, in the sense that no space discretization approximation is used and MCMC error is the only source of inaccuracy. That is achieved by using retrospective sampling techniques and devising a pseudo-marginal infinite-dimensional MCMC algorithm that converges to the exact target posterior distribution. Computational efficiency is favored by considering a nearest neighbor Gaussian process, allowing for the analysis of large datasets. An extension to consider spatiotemporal models is also proposed. The efficiency of the proposed methodology is investigated in simulated examples and its applicability is illustrated in the analysis of some real point process datasets.


\end{abstract}

\noindent%
{\it Keywords:}  Gaussian process, Pseudo-marginal MCMC, Poisson estimator, retrospective sampling, NNGP.
\vfill

\section{Introduction}\label{sec_intro}

Point pattern statistical models aim at modeling the occurrence of a given event of interest in some region. This is often a compact region is $\mathds{R}^2$ such that each data point is interpreted as the location of occurrence of a given event of interest. The most widely used point process model is the Poisson process (PP), in which the number of events in any region has Poisson distribution and is independent for disjoint regions. The Poisson process dynamics is mainly determined by its intensity function (IF) which, roughly speaking, determines the instant rate of occurrence of the event of interest across the region being considered. If the IF is assumed to vary stochastically, the resulting process is called a Cox process. Several classes of Cox process models have already been proposed in the literature, including non-parametric models in which the IF varies continuously as a function of a latent Gaussian process \citep{moller1998log,G&G}. For several of the real examples considered to fit those models, inference results suggest that a piecewise constant IF ought to be suitable to accommodate the variability of the observed process. Figure \ref{fig1} shows three examples of estimated intensity functions regarding white oaks in Lansing Woods, USA, particles in a bronze filter section profile, and fires in a region of New Brunswick, Canada. All the datasets are available in the R package \texttt{spatstat} \citep{baddeley2015spatial} and are revisited in the analyzes presented in Section \ref{sec_app}. The IF estimates are obtained via kernel smoothing using the R package \texttt{splancs} \citep{rowlingson2012package} through the function \emph{kernel2d}. Results suggest that a piecewise constant IF assuming up to five different values should be suitable to fit those datasets. This is based on the variance behavior of the Poisson process given its IF which, in turn, is based on the variance of the Poisson distribution. Other motivating examples can be found in \citet{hildeman2017level}.

\begin{figure}[h!]
	\centering{
		\includegraphics[width=0.9\linewidth]{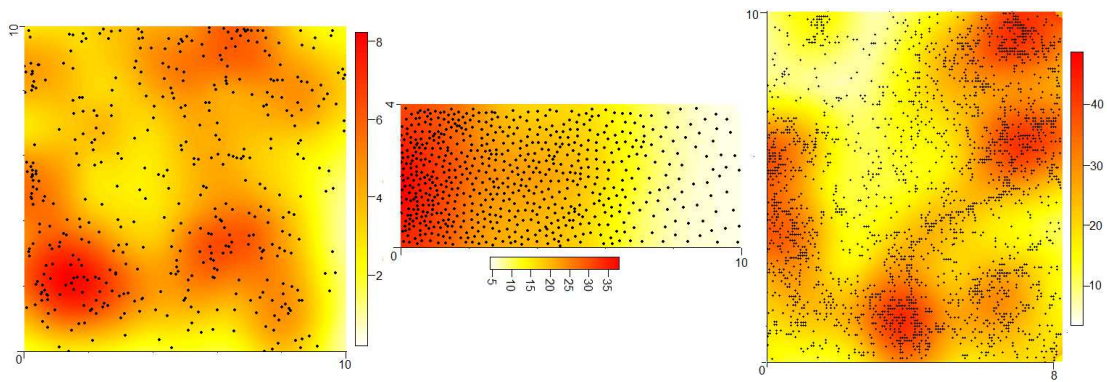}
		\caption{Examples of continuously varying estimated intensity function of Poisson processes. From left to right: white oaks in Lansing Wood, particles in a bronze filter and fires in New Brunswick.}
		\label{fig1}}
\end{figure}

A piecewise constant structure for the IF also allows for the analysis of the point pattern phenomenon to be performed under a cluster analysis perspective. This may be quite useful and interpretable in some applications. Each region with a constant IF constitutes a cluster and the clustering structure may be related to some practical aspect of the problem.

This paper considers a class of Cox process models with piecewise constant intensity function that is able to define the regions in which the IF is constant in a flexible way. The motivation is to have models that are suitable to explain and predict the variability of point process phenomena, yet providing more precise estimates than methodologies with continuously varying IF's. That is achieved through the \emph{level-set Cox process} (LSCP), originally proposed in \citet{hildeman2017level}, which is based on a structure proposed in \citet{dunlop2016hierarchical} to define a piecewise constant function in a given space by the levels of a latent Gaussian process (GP). This means that the region in which the intensity function assumes a given value is defined by the region in which a latent Gaussian processes assumes values in a given interval. This construction is considerably flexible to define space partitions, allowing for various shapes and sizes of the regions, including disjoint regions with the same IF.
\citet{hildeman2017level} actually proposes a more general version of level-set Cox process in which the observed point process follows independent log-Gaussian Cox processes in each region defined by the random partition, meaning that the IF depends on independent Gaussian process in each region. Therefore, the level-set Cox process considered in this paper is a particular case of the model proposed in \citet{hildeman2017level}, in which the IF is constant inside each region.

The methodology from \citet{hildeman2017level} however considers a discretized (finite-dimensional) approximation of the originally proposed model to approach the problem of performing statistical inference based on observations of the level-set Cox process. The authors argue that ``some finite-dimensional approximation of the LSCP model is needed if it is to be used for inference". The discrete approximation is based on a regular lattice that defines a joint model on the number of observations in each cell of the lattice as conditionally (on the respective rates) independent Poisson distributions. An important consequence of this approach, as it is mentioned by the authors, is that the information on the fine-scale behavior of the point pattern is lost. Furthermore, the latent Gaussian processes from the original LSCP are replaced by the respective multivariate normal distributions on one location inside each square of the lattice (usually the center). Finally, although the authors provide results to establish that the posterior distribution based on the discrete approximation converges (in total variation distance) to the posterior distribution under the continuous model, no bounds for the approximation error are provided.

The particular case of the LSCP proposed in \citet{hildeman2017level} in which the IF is piecewise constant with only two levels is proposed in \citet{Mylly}, where the authors also consider a discrete approximation of the process to perform Bayesian inference. This, in turn, is a special case of the random-set-generated Cox process described in \citet[p. 382]{Illian}, in which the random dynamics of the random partition (in 2 regions) is not specified.

The main aim of this paper is to devise an exact methodology to perform Bayesian inference for level-set Cox process models in which the IF is piecewise constant. The term exact here means that no discrete approximation of any kind is assumed and MCMC error is the only source of inaccuracy, as in any standard Bayesian analysis. This is not a trivial task due to: i) the intractability of the likelihood function of the proposed model (to be made clear in Section \ref{sec_model}); and ii) the infinite dimensionality of the model's parameter space due to the latent Gaussian process component. These two issues arise in several classes of statistical models nowadays \citep[see, for example,][]{bpr06a,G&G,GRL} and, given the high complexity involved, it is common to only find solutions in the literature that are based on discretization of continuous processes, as is the case with LSCP. The use of such approximations however has considerable disadvantages \citep[see][]{simpson2016going}. It induces a bias in the estimates, which is typically hard to quantify and control. Furthermore, even if limiting results guarantee some type of convergence to the continuous model when the discretization gets finer, the computational cost involved to get reasonably good approximations may be unknown and/or too high. Finally, discrete approximations may lead to serious model mischaracterization, compromising the desired properties of the model.

The exact inference methodology proposed in this paper makes use of a simulation technique called retrospective sampling which basically allows to deal with infinite-dimensional random variables by unveiling only a finite-dimensional representation of this. A pseudo-marginal MCMC algorithm that converges to the exact posterior distribution of all the unknown quantities in the model is proposed. These quantities include the intensity function and the random partition that defines the piecewise constant structure. Also, the Monte Carlo approach makes it straightforward to sample from the posterior predictive distribution of various appealing functions.

The known high computational cost involved in algorithms that deal with the simulation of Gaussian processes is mitigated by considering the nearest neighbor Gaussian process (NNGP) \citep{NNGP}. This has a particular conditional independence structure that leads to a sparse covariance structure and, consequently, to huge computational gains when compared to traditional Gaussian processes. An example with more than 5 thousand observations is presented in Section \ref{sec_app}. This is, to the best of our knowledge, the first work to consider a latent NNGP within a complicated likelihood structure that does not allow for directly sampling from the posterior or full conditional distribution of the NNGP component. In this sense, this paper also offers some methodological contributions to deal with latent NNGPs.

Another major contribution in this paper is the introduction of an extension of LSCP to consider spatiotemporal processes. This means that the point process is observed on the same region over discrete time and a temporal correlation structure is introduced in the model to explain the evolution of both the space partition and the IF levels.

Section \ref{sec_model} of the paper presents the level-set Cox process model and discusses its most important properties. The proposed MCMC algorithm and the extension to consider spatiotemporal models is presented in Section \ref{sec_inf}, which also addresses some relevant computational issues. Section \ref{sec_sim} explores some simulated examples to discuss important aspects and investigate the efficiency of the proposed methodology. Finally, Section \ref{sec_app} applies the methodology to some real datasets. For one of them, results are compared to those obtained with a continuously varying IF Cox process model.

\section{Level-set Cox process models}\label{sec_model}

Let $Y=\{Y(s): s \in S\}$ be a Poisson process in some compact region $S\in\mathbb{R}^n$ with intensity function $\lambda_{S}=\{\lambda(s): s \in S\}$, $\lambda(s): S  \rightarrow \mathbb{R}^+$ and define $\mathbf{S}_K=\{S_1,\ldots,S_K\}$, $K\in \mathbb{N}$, to be a finite partition of $S$. We shall focus on the case where $S\subset\mathbb{R}^2$ given its practical appealing, although all the definitions and results to be presented in this paper are valid for $\mathbb{R}^n$ or any other measurable space \citep[see][]{kingman1993poisson}. Now let $\lambda=(\lambda_1, \ldots, \lambda_K)$ be a vector of positive parameters such that the IF of $Y$ on $S_k$ is $\lambda_k$, $k=1,\ldots,K$. Let also $c=(c_1,\ldots, c_{K-1}) \in \mathbb{R}^{K-1}$, $-\infty=c_0 < c_1 < \ldots < c_{K-1}< c_K=\infty$, be the values that define the level sets of a latent Gaussian process $\beta$ on $S$ and, consequently, the finite partition $\mathbf{S}_K$ of $S$. The level-set Cox process model is then defined as follows:
\begin{eqnarray}
\ds (Y|\lambda_{S}) &\sim& PP(\lambda_{S}),   \\
\lambda(s) &=& \ds\sum_{k=1}^{K} \lambda_k I_{k}(s),\;s\in S,  \\
S_k &=& \{ s\ \in\ S;\; c_{k-1} < \beta(s) < c_k    \},\; k=1,\ldots,K, \\
\beta &\sim& GP(\mu,\Sigma(\sigma^2,\tau^2)),\\
\pi(c) &=& \mathds{1}(c_1<\ldots<c_{K-1}),\\
\lambda &\sim& prior,
\end{eqnarray}
where $I_{k}(s)$ is the indicator of $\{s \in S_k\}$ and $GP(\mu,\Sigma(\sigma^2,\tau^2))$ is a stationary Gaussian process with mean $\mu$ and covariance function $\Sigma(\sigma^2,\tau^2)$, where $\sigma^2$ is the stationary variance and $\tau^2$ is a range parameter indexing the correlation function. The prior on $\lambda$ will be properly defined in Section \ref{subseccov}. Note that the levels of the IF are not a function of the GP, which simply specifies, along with $c$, the partition of $S$ that defines the piecewise constant structure. Finally, $\beta$, $c$ and $\lambda$'s are assumed to be independent \emph{a priori}. One may also consider an IF of the type $\lambda(s) = \sum_{k=1}^{K} \kappa(s)\lambda_k I_{k}(s)$, where $\kappa(s)$ is a known offset term. This is useful for example when observing cases of some human disease in in a region with varying population density.

Notice that the likelihood of the proposed level-set Cox process model is not identifiable. That is because, for each point in the (infinite-dimensional) parameter space, there are an uncountable number of other points that return the same likelihood value. That is basically implied by the non-identification of the scale of the GP $\beta$. In order to see that, let us redefine $\beta$ as $\beta=\mu + \sigma \beta^{*}$, where $\beta^{*} \sim N(0,\Sigma(1,\tau^2))$. Then, any transformation of the type $\mu^{*}=a\mu + b$, $\sigma^{*}=a\sigma$ and $c^{*}_{k}= b + a c_{k}$, $\forall m \in \mathbb{R},\; a \in \mathbb{R}^+$, $\forall k$, defines the same partition $\mathbf{S}_K$ and, consequently, the same likelihood value. A simple way to solve this problem whilst not compromising the flexibility of the model is to fix either $c$ or the hyperparameters $(\mu,\sigma^2)$. We shall adopt the latter, which also avoids the high complexity involved in estimating those parameters.

Model identifiability could also be compromised, in theory, by label-switching of the coordinates of $\lambda$. Nevertheless, given the complexity of the sample space, this is not expected to happen in an MCMC context, as it was the case for all the examples to be presented in this paper.

A theoretical limitation of the model is the neighboring structure implied by the continuity of the latent Gaussian process. For any model with $K\geq3$, regions 1 and K share a border with only one other region, 2 and $K-1$, respectively, and any other region $k$ shares a border with regions $k-1$ and $k+1$. Whilst this represents a clear theoretical limitation of the model, it is not expected to be a practical problem in most cases. That is because the uncertainty around the borders is higher than the uncertainty away from them, so the need to pass through a third region to change between two other ones should typically not affect the model fitting. Furthermore, the estimated ordering of the $\lambda_k$'s will consider the likelihood of the different neighboring configurations. Figure \ref{fig2} shows an example of a neighboring structure with $K=3$ that is not contemplated by the proposed model and three possible structures that may be estimated. Despite this restriction, we highlight the great flexibility of the model to define the partition $\mathbf{S}_K$ of $S$. Basically, given the neighboring restriction described above, any smooth partition of the space is contemplated by the model. In particular, it is possible to have disjoint regions with the same IF.

\begin{figure}[h!]
	\centering{
		\includegraphics[width=0.85\linewidth]{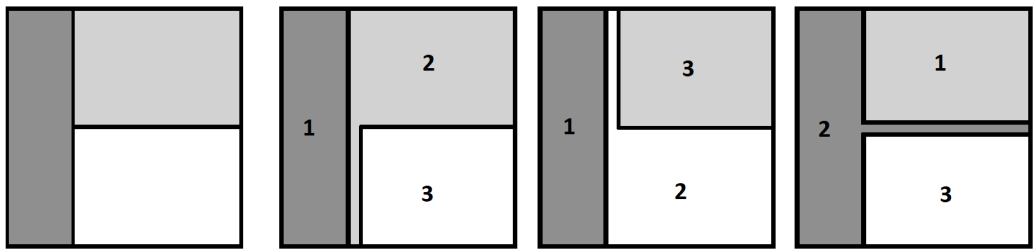}
		\caption{Example of a neighboring structure with $K=3$ that is not contemplated by the proposed model (far left) and three possible structures that may be estimated.}
		\label{fig2}}
\end{figure}

We consider the number of levels $K$ for the IF to be fixed. The choice of this value may be based on prior information about the phenomenon, the type of structure the researcher wants to estimate, or even some empirical analysis of the data, for example, based on kernel smoothing estimates of the IF \citep[see][]{rowlingson2012package}. The choice of $K$ should consider the trade-off between fitting and parsimony and take into account the scale of the Poisson distribution.

Finally, the piecewise constant structure allows for the analysis to be performed under a cluster analysis perspective. This may be useful and interpretable in some applications. Each region $S_k$ constitutes a cluster and the clustering structure may be related to some practical aspect of the problem.

\subsection{Nearest neighbor Gaussian process prior for $\beta$}

The computational bottleneck of the methodology proposed in this paper is the sampling of the Gaussian process $\beta$, which is performed not only when updating this component but also, retrospectively, when updating $N^*$ and $\lambda$. The cost to simulate from a $d$-dimensional multivariate normal distribution is $\mathcal{O}(d^3)$ and, in our case, $d$ will typically be in the order of $10^3$ or $10^4$. Several solutions have been proposed in the literature to deal with this problem. Some of them are exact in the sense that the approximation process defines a valid probability measure. This property is highly desirable as it guarantees that the analysis is performed under the Bayesian paradigm. In this work, we consider the use of the Nearest neighbor Gaussian process (NNGP), proposed in \citet{NNGP}. The NNGP process was originally designed to approximate some Gaussian process, called the parent GP, in classical geostatistical problems in which the (discretely) observed process is either the GP itself or the GP + i.i.d. noise. In our context, the GP is latent in a more complex way. Nevertheless, it is used only to determine the partition of $S$ and not the actual values of the IF. For that reason and because the NNGP defines a valid Gaussian process probability measure, it is reasonable to see the NGPP simply as the GP prior for $\beta$ and not an approximation for desirable traditional GP.

The NNGP is a valid Gaussian process, devised from a parent $GP(\mu,\Sigma(\sigma^2,\tau^2))$ by imposing some conditional independence structure that leads to a sparse covariance structure. For a reference set $\mathcal{S}=\{\mathfrak{s}_1,\ldots,\mathfrak{s}_r\}$ and a maximum number $m$ of neighbors, the NNGP factorizes the distribution of $\beta$ (conditional on parameters) as follows:
\begin{eqnarray}
\ds \pi(\beta) &=& \pi(\beta_{\mathcal{S}})\pi(\beta_{S\setminus\mathcal{S}}|\beta_{\mathcal{S}}),   \\
\pi(\beta_{\mathcal{S}}) &=& \pi_{PG}(\beta_{\mathfrak{s}_1})\pi_{PG}(\beta_{\mathfrak{s}_2}|\beta_{\mathfrak{s}_1})\pi_{PG}(\beta_{\mathfrak{s}_3}|\beta_{\mathfrak{s}_1},\beta_{\mathfrak{s}_2})\ldots
\pi_{PG}(\beta_{\mathfrak{s}_{m+1}}|\beta_{\mathfrak{s}_1},\ldots,\beta_{\mathfrak{s}_m})\\
&&\pi_{PG}(\beta_{\mathfrak{s}_{m+2}}|\beta_{\mathcal{N}(\mathfrak{s}_{m+2})})\ldots
\pi_{PG}(\beta_{\mathfrak{s}_{r}}|\beta_{\mathcal{N}(\mathfrak{s}_{r})}),  \\
\pi_{PG}(\beta_{S_0}|\beta_{\mathcal{S}}) &=& \prod_{i=1}^I\pi_{PG}(\beta{s_i}|\beta_{\mathcal{N}(s_{i})}),\;\mbox{for any finite set}\;S_0=\{s_1,\ldots,s_I\}\subset S\setminus\mathcal{S},
\end{eqnarray}
where $\pi_{PG}$ is the respective density under the parent GP measure, $\mathcal{N}(\mathfrak{s}_{i})$ is the set of the $m$ closest neighbors of $\mathfrak{s}_{i}$ in $\{\mathfrak{s}_{1},\ldots,\mathfrak{s}_{i-1}\}$, for $i\geq m+2$, and $\mathcal{N}(s_{i})$ is the set of the $m$ closest neighbors of $s_{i}$ in $\mathcal{S}$. We shall refer to the resulting NNGP process as $NNGP(\mu,\tilde{\Sigma}(\sigma^2,\tau^2))$.

In traditional geostatistical models, in which the GP is observed (with error) in a missing completely at random (MCAR) set of locations, the reference set is conveniently defined to be the locations of the observations. In our context however, by the very nature of the process being observed, that is not a reasonable choice. Instead, we set $\mathcal{S}$ to be a regular lattice on $S$. Based on the results in \citet{NNGP} and results of several simulated examples with our model, we set $r=2500$ and $m=16$.

The NNGP leads to massive gains in computational cost when compared to traditional GPs due to its particular conditional independence structure. Moreover, the distribution of $\pi(\beta_{S_0}|\beta_{\mathcal{S}})$ is conditionally independent among the locations in any finite set $S_0\subset S\setminus\mathcal{S}$, which means that the algorithm to sample from this distribution can be parallelized. This is an appealing feature in our case as all the locations from $Y$ and $N$ are in $S\setminus\mathcal{S}$ and have will be sampled from the NNGP prior on every iteration of the MCMC algorithm. The specific steps of the proposed MCMC algorithm that can be parallelized are indicated in the algorithm shown in Appendix B.

\section{Bayesian inference}\label{sec_inf}

Inference for the level-set Cox process model is performed under the Bayesian paradigm, meaning that it is based on the posterior distribution of all the unknown quantities of the model.
As it was mentioned before, the stationary mean and variance of the Gaussian process are fixed to identify the model. We also choose to fix the correlation parameter $\tau^2$. We believe this can be done in a reasonable way based on the scale of the domain $S$ - this issue will be discussed in Section \ref{subseccov} and explored in the simulated studies in Section \ref{subsecsens}. Also, fixing the parameters that index the GP brings huge computational gains to the inference process. Regarding parameter $c$, we set a uniform improper prior with the restriction $c_1<\ldots<c_{K-1}$.

Defining $\theta=\{\lambda,c,\beta\}$ to be all the unknown quantities of the model, the likelihood function for the level-set Cox process model is obtained by writing the density of $Y$ w.r.t. the measure of a unit intensity Poisson process on $S$ \citep[see][]{GF19}, which is given by
\begin{eqnarray}\label{eq_lik}
\ds L(\theta;Y) \propto \exp  \left\{ -\displaystyle \sum_{k=1}^{K} \lambda_k \mu_k \right\} \displaystyle \prod_{k=1}^{K}  \left(\lambda_k \right)^{|Y_k|} ,
\end{eqnarray}
where $\mu_k$ is the area of $S_k$ and $|Y_k|$ is the number of events from $Y$ on $S_k$.

The posterior distribution of $\theta$ as well as the full conditional distribution of any of its components have densities proportional to the joint density $\pi(\theta,Y)=L(\theta,Y)\pi(\theta)$, such that
\begin{equation}
\ds \pi(\theta,Y) \propto  \exp  \left\{ -\displaystyle \sum_{k=1}^{K} \lambda_k \mu_k \right\} \left[\displaystyle \prod_{k=1}^{K}  \left(\lambda_k \right)^{|Y_k|}\pi(\lambda_k)\right]\left[\displaystyle \prod_{k=1}^{K-1} \pi(c_k) I_{\scriptstyle(c_1<\ldots<c_{k-1})}\right]\pi_{PG} (\beta), \label{eq_jointdens}
\end{equation}
where $\pi_{PG} (\beta)$ is written w.r.t. some suitable dominating measure, which is irrelevant for the derivation of the inference methodology.

Given its complexity, the posterior distribution of $\theta$ is assessed via MCMC. This is not a trivial task for two main reasons. First, the MCMC algorithm is infinite-dimensional because of the infinite dimensionality of the coordinate $\beta$. Second, the likelihood in (\ref{eq_lik}) is analytically intractable, since the areas $\mu_k$ of the regions $S_k$ cannot be computed exactly. It is then quite challenging to devise a valid and efficient MCMC that is exact in the sense of converging to the exact posterior distribution of $\theta$.

To deal with the infinite-dimensionality of $\beta$ we resort to a simulation technique called retrospective sampling. In the context of simulation of infinite-dimensional random variables, this means that only a finite-dimensional representation of the infinite-dimensional r.v. is simulated and this representation has the following two properties: i) it is enough to unveil only this representation to execute the algorithm in context (an MCMC in our case); ii) any finite-dimensional part of the infinite-dimensional remainder of that r.v. can be simulated conditional on this representation. This means that the GP $\beta$ is to be simulated only at a finite (but random) collection of locations on each iteration of the MCMC chain.
It is the particular random structure of those locations that guarantee the two properties above. The idea of retrospective sampling in the context of simulation of infinite-dimensional r.v.'s was introduced  in \citet{EA0} to perform exact simulation of diffusion paths. It was later used in a statistical context in several works \citep[see, for example,][]{bpr06a,G&G,GRL}.

The intractability of the likelihood function precludes us from performing standard Metropolis-Hastings (MH) steps for any coordinate of the chain since all of them appear in this function. Our solution resorts to a powerful and flexible general MCMC algorithm called the pseudo-marginal Metropolis-Hastings (PMMH), proposed in \citet{andrieu2009}. This algorithm allows us to replace the likelihood terms in the expression of the MH acceptance probability with a pointwise unbiased and almost surely non-negative estimator of the likelihood function. This leads to an augmented Markov chain that has the desired posterior distribution as the marginal invariant distribution of the chain - marginalized w.r.t. the random seed of the aforementioned unbiased estimator. Naturally, the efficiency of the algorithm relies on the properties of that estimator. Roughly speaking, the smaller is its variance the better \citep[see][]{vihola}.

The only unknown quantities in the expression of the likelihood function in (\ref{eq_lik}) are the areas $\mu_k$. This means that, in order to obtain an unbiased estimator for the likelihood, we need an unbiased estimator for $M=\exp  \left\{ -\displaystyle \sum_{k=1}^{K} \lambda_k \mu_k \right\}$. Note that this quantity does not depend on the observed Poisson process events. Although unbiased estimators for the $\mu_k$'s can be easily obtained using uniform r.v.'s on $S$, it is not straightforward to devise an unbiased estimator for $M$ from this. We resort to a neat class of unbiased estimators called the Poisson estimator \citep[see][]{bpr06a}, which devises an unbiased estimator for $M$ as a function of a random Poisson number of uniformly distributed r.v.'s on $S$. The unbiased estimator for $M$ and some of its important properties are given in Propositions \ref{prop1} and \ref{prop2}, respectively.

\begin{proposition}\label{prop1}
Let $N$ be a unit rate Poisson process in the cylinder with base $S$ and height in $[0,+\infty)$ and define $N=g(N^{*},\lambda^*)$ as the projection on $S$ of the points from $N^*$ that have height smaller than $\lambda^*$, where $\lambda^*= (\delta\lambda_M -\lambda_m)$, $\lambda_{M}=\underset{k}{\max}\{\lambda_k\}$ and $\lambda_{m}=\underset{k}{\min}\{\lambda_k\}$. Then, for any $\delta>1$, an unbiased and almost surely positive estimator for $M$ is given by
\begin{eqnarray}
\ds \hat{M}=e^{-\mu(S)\lambda_{m}} \displaystyle\displaystyle\prod_{k=1}^{K} \left( \frac{\delta\lambda_{M} - \lambda_k}{\delta\lambda_{M} - \lambda_{m}}\right)^{|N_k|}, \label{PEM}
\end{eqnarray}
where $\mu(S)$ is the area of $S$ and $|N_k|$ is the number of points from $N$ falling in $S_k$.
\end{proposition}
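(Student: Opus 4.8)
The plan is to establish unbiasedness by computing the expectation of $\hat{M}$ conditional on the partition $\mathbf{S}_K$ (equivalently, conditional on $\beta$, $c$ and $\lambda$), so that the only randomness integrated out is that of the cylinder process $N^*$; given the partition, the areas $\mu_1,\ldots,\mu_K$ are fixed numbers. The first step is to identify the conditional law of $N$. Since $N^*$ is a unit rate Poisson process on $S\times[0,\infty)$ and $N$ is obtained by retaining the points of height below $\lambda^*$ and projecting them onto $S$, the restriction-and-mapping theorem for Poisson processes \citep{kingman1993poisson} shows that $N$ is a Poisson process on $S$ with constant rate $\lambda^*=\delta\lambda_{M}-\lambda_{m}$. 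Because $\{S_1,\ldots,S_K\}$ partition $S$, the counts $|N_1|,\ldots,|N_K|$ are then independent with $|N_k|\sim\mathrm{Poisson}(\lambda^*\mu_k)$.

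The second step is the expectation computation. Using the independence of the $|N_k|$ to factor the product, together with the Poisson probability generating function $E[t^{X}]=\exp\{\gamma(t-1)\}$ for $X\sim\mathrm{Poisson}(\gamma)$, I would evaluate each factor at $t_k=(\delta\lambda_{M}-\lambda_k)/(\delta\lambda_{M}-\lambda_{m})$ and $\gamma=\lambda^*\mu_k$. The key simplification is that $\lambda^*(t_k-1)=\lambda_{m}-\lambda_k$, so the $k$-th factor contributes $\exp\{\mu_k(\lambda_{m}-\lambda_k)\}$. Multiplying over $k$ and using $\sum_{k=1}^K\mu_k=\mu(S)$ gives $\exp\{\lambda_{m}\mu(S)-\sum_{k}\lambda_k\mu_k\}$; the factor $\lambda_{m}\mu(S)$ cancels against the deterministic prefactor $e^{-\mu(S)\lambda_{m}}$, leaving exactly $M=\exp\{-\sum_{k}\lambda_k\mu_k\}$, which proves unbiasedness.

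The third step is almost sure positivity. Each base $t_k$ is strictly positive: the denominator is $\lambda^*=\delta\lambda_{M}-\lambda_{m}>0$ and every numerator satisfies $\delta\lambda_{M}-\lambda_k>0$, both because $\delta>1$ forces $\delta\lambda_{M}>\lambda_{M}\geq\lambda_k\geq\lambda_{m}$ (recall the $\lambda_k$ are positive). Hence $\hat{M}$ is a product of strictly positive terms and is positive with probability one; note that $\delta>1$ is precisely the condition that rules out a vanishing numerator for the index attaining $\lambda_{M}$.

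The computation is short, so the only real subtlety is conceptual rather than technical: one must be careful that the relevant expectation is taken over $N^*$ alone, treating the random areas $\mu_k$ as fixed once the partition is given, and must correctly invoke the thinning-and-projection argument that turns the unit rate cylinder process into a rate-$\lambda^*$ Poisson process on $S$. Once the conditional Poisson structure of the block counts $|N_k|$ is in place, the generating-function identity does all the work, and no bound on the unknown areas $\mu_k$ is required.
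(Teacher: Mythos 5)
Your proof is correct, and it reaches the same final computation as the paper but via a genuinely different decomposition. The paper first conditions on the total count $|N|$, notes that given $|N|$ the points are i.i.d.\ uniform on $S$ so that the region memberships are multinomial with probabilities $\mu_k/\mu(S)$, averages the product over points to get $\bigl((\mu(S)\delta\lambda_M-\sum_k\mu_k\lambda_k)/(\mu(S)(\delta\lambda_M-\lambda_m))\bigr)^{|N|}$, and then sums the Poisson series for $|N|\sim \mathrm{Poisson}(\mu(S)\lambda^*)$. You instead invoke the complete independence of a Poisson process over the disjoint regions $S_1,\ldots,S_K$, so that $|N_1|,\ldots,|N_K|$ are independent $\mathrm{Poisson}(\lambda^*\mu_k)$ variables, and apply the probability generating function region by region; the identity $\lambda^*(t_k-1)=\lambda_m-\lambda_k$ then does the work. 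The two routes are equivalent in content (a marked Poisson process viewed point-wise versus region-wise), but yours factorizes the estimator directly and avoids the interchange of product and expectation over the multinomial labels, which is arguably a cleaner bookkeeping; the paper's point-wise version has the minor advantage of not needing to name the independent-scattering property explicitly. Your explicit treatment of almost sure positivity (the role of $\delta>1$ in keeping every numerator strictly positive) and your remark that the expectation is taken over $N^*$ alone with the areas $\mu_k$ held fixed are both points the paper leaves implicit.
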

\begin{proposition}\label{prop2}
Estimator $\hat{M}$ has a finite variance which is a decreasing function of $\delta$.
\end{proposition}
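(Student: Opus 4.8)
The plan is to obtain a closed form for the second moment of $\hat{M}$ and read off both assertions from it. The key structural observation is that, conditionally on the partition $\mathbf{S}_K$ (equivalently, on $\theta$), the point process $N$ formed by projecting the points of $N^*$ lying below height $\lambda^*=\delta\lambda_M-\lambda_m$ is a homogeneous Poisson process on $S$ with constant rate $\lambda^*$. Since $S_1,\dots,S_K$ partition $S$, the counts $|N_1|,\dots,|N_K|$ are then independent with $|N_k|\sim\mathrm{Poisson}(\lambda^*\mu_k)$. Writing $a_k=(\delta\lambda_M-\lambda_k)/(\delta\lambda_M-\lambda_m)\in(0,1]$, the estimator becomes $\hat{M}=e^{-\mu(S)\lambda_m}\prod_k a_k^{|N_k|}$, so every moment factorizes over $k$ and can be evaluated with the Poisson probability generating function $\mathbb{E}[t^X]=e^{\nu(t-1)}$.

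First I would compute the second moment, which yields
\[
\mathbb{E}[\hat{M}^2]=e^{-2\mu(S)\lambda_m}\prod_{k=1}^K \mathbb{E}\big[(a_k^2)^{|N_k|}\big]=e^{-2\mu(S)\lambda_m}\exp\Big\{\lambda^*\sum_{k=1}^K \mu_k(a_k^2-1)\Big\}.
\]
Because this is a finite product of finite exponential factors, $\mathbb{E}[\hat{M}^2]<\infty$, and hence $\mathrm{Var}(\hat{M})=\mathbb{E}[\hat{M}^2]-M^2<\infty$ is immediate. The same calculation with $a_k$ in place of $a_k^2$ returns $\mathbb{E}[\hat{M}]=M$, reproducing the unbiasedness from Proposition \ref{prop1} and confirming the setup.

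For monotonicity, note that $M=e^{-\sum_k\lambda_k\mu_k}$ is free of $\delta$, so it suffices to show $\mathbb{E}[\hat{M}^2]$ is decreasing in $\delta$; since the prefactor $e^{-2\mu(S)\lambda_m}$ is also $\delta$-free and $\exp$ is increasing, this reduces to showing that $f(\delta):=\lambda^*\sum_k\mu_k(a_k^2-1)$ is decreasing. The crux is an algebraic simplification of the per-region term. Using $\lambda^*=\delta\lambda_M-\lambda_m$, I would rewrite $\lambda^*(a_k^2-1)=2(\lambda_m-\lambda_k)+(\lambda_m-\lambda_k)^2/(\delta\lambda_M-\lambda_m)$, so that
\[
f(\delta)=2\sum_{k=1}^K\mu_k(\lambda_m-\lambda_k)+\frac{1}{\delta\lambda_M-\lambda_m}\sum_{k=1}^K\mu_k(\lambda_m-\lambda_k)^2.
\]
Here the first sum does not depend on $\delta$, while the coefficient $\sum_k\mu_k(\lambda_m-\lambda_k)^2\ge 0$ multiplies the factor $1/(\delta\lambda_M-\lambda_m)$, which is decreasing in $\delta$ because $\lambda_M>0$. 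Hence $f$ is decreasing (strictly so whenever some $\lambda_k>\lambda_m$), giving the claim.

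The step I expect to demand the most care is the identity for $\lambda^*(a_k^2-1)$: in raw form the $\delta$-dependence sits in both numerator and denominator, and only after peeling off the $\delta$-free piece $2(\lambda_m-\lambda_k)$ does the residual term become transparently monotone. Everything else — the conditional Poisson structure, the generating-function evaluation, and the finiteness conclusion — is routine once that simplification is in place. It is also worth flagging the degenerate case in which all $\lambda_k$ coincide, where $\hat{M}$ is deterministic and $\mathrm{Var}(\hat{M})\equiv 0$, so the function is \emph{decreasing} only in the weak sense.
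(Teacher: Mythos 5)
Your proof is correct, and its core is the same as the paper's: condition on $\theta$, note that $|N_1|,\dots,|N_K|$ are independent with $|N_k|\sim\mathrm{Poisson}(\mu_k\lambda^*)$, and evaluate $\mathbb{E}[\hat{M}^2]$ via the Poisson generating function, from which finiteness is immediate. Where you genuinely diverge is the monotonicity step: the paper differentiates $\mathrm{Var}(\hat{M}_1)$ with respect to $\delta$ and asserts the derivative is negative (the printed derivative in Appendix A in fact appears to contain typographical slips, e.g.\ a term $2(\delta\lambda_M-\lambda_k)$ where one would expect the dimensionless $2a_k$, although the sign conclusion is right since the correct derivative is $-\exp(\kappa)\sum_k\mu_k\lambda_M(1-a_k)^2\le 0$). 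You instead avoid calculus entirely by the identity $\lambda^*(a_k^2-1)=2(\lambda_m-\lambda_k)+(\lambda_m-\lambda_k)^2/(\delta\lambda_M-\lambda_m)$, which isolates the $\delta$-dependence in a nonnegative coefficient times the manifestly decreasing factor $1/(\delta\lambda_M-\lambda_m)$. This is cleaner and makes transparent both the monotonicity and the degenerate case where all $\lambda_k$ coincide (variance identically zero, hence only weakly decreasing), a caveat the paper does not mention. Your remark that $M^2$ is $\delta$-free, so monotonicity of the second moment suffices, is exactly the reduction the paper implicitly uses as well.
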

\begin{proof}
See Appendix A for the proofs of Propositions \ref{prop1} and \ref{prop2}.
\end{proof}

In our retrospective sampling context, it is $N$ that determines the locations at which $\beta$ is to be simulated, besides the locations from $Y$. Furthermore, the mean number of locations from $N$ is $(\delta\lambda_M -\lambda_m)\mu(S)$, which gives the intuition for the result in Proposition \ref{prop2}. This establishes a trade-off related to the choice of $\delta$, as an increase in its value reduces the variance of $\hat{M}$ (and consequently improves the mixing of the MCMC chain) but increases the computational cost per iteration of the MCMC algorithm (and vice-versa).

We define a pseudo-marginal MCMC algorithm to sample from the posterior distribution of $\theta$ based on the estimator in (\ref{PEM}). On each iteration of the Markov chain, the general algorithm proposes a move $(\theta,N^*)\rightarrow(\ddot{\theta},\ddot{N}^*)$ from a density $q(\ddot{\theta},\ddot{N}^*|\theta,N^*)=q(\ddot{\theta}|\theta)q(\ddot{N}^*|N^*)$, where $q(\ddot{N}^*|N^*)=q(\ddot{N}^*)$ is the Poisson process defined in Proposition \ref{prop1}, which we shall call the pseudo-marginal proposal, and accepts with probability given by
\begin{equation}\label{jdens}
\ds 1 \wedge \left(\frac{\hat{\pi}(\ddot{\theta};\ddot{N^*})}{\hat{\pi}(\theta;N^*)}\frac{q(\theta|\ddot{\theta})}{q(\ddot{\theta}|\theta)}\right),
\end{equation}
where
\begin{equation}\label{a.p.2}
\ds \hat{\pi}(\theta;N^*) =  e^{-\mu(S)\lambda_{m}} \left[\displaystyle\prod_{k=1}^{K} \left( \frac{\delta\lambda_{M} - \lambda_k}{\delta\lambda_{M} - \lambda_{m}}\right)^{|N_k|}  \left(\lambda_k \right)^{|Y_k|}\pi(\lambda_k)\right] \pi(c) \pi_{PG}(\beta).
\end{equation}

The algorithm above is bound to be inefficient as it is, given the complexity of the chain's coordinates. We adopt simple yet important changes to obtain a reasonably efficient algorithm. First, we split the coordinates into blocks, making this a Gibbs sampling with pseudo-marginal MH steps. This implies that the acceptance probability of any block is also given by (\ref{jdens}). Also, the choice to define $N$ as a function of $N^*$, with the distribution of the latter being independent of $\theta$, instead of working directly with $N$, allows us to sample $N^*$ alone as one block of the Gibbs sampler. Furthermore, note that $N^*$ has an infinite collection of points but, in order to evaluate the acceptance probability in (\ref{jdens}), it is enough to unveil $N$ - the projection on $S$ of the points of $N^*$ that are below $\lambda^*$. This means that, as is the case for $\beta$, $N^*$ is also sampled retrospectively. The blocks of the Gibbs sampling are: $N^*$, $\beta$, $\lambda$, $c$. The algorithm to sample from each block is described below.

\subsection{Sampling $N^*$}

The standard version of the pseudo-marginal algorithm proposes a move in $N^*$ from the pseudo-marginal proposal $q(N^*)$ and accepts with probability given by (\ref{jdens}). Furthermore, the fact that the acceptance probability in (\ref{jdens}) depends on $N^*$ only through the points falling below $\lambda^*$, implies that we only need to unveil $N$ in order to update $N^*$. Nevertheless, since $N^*$ will typically have many points falling below $\lambda^*$, this proposal might have a low acceptance rate which, in turn, may compromise the mixing of the chain.

Instead, we adopt a proposal distribution that updates $N^*$ below and above $\lambda^*$, separately. The latter is proposed from the pseudo-marginal proposal and accepted with probability 1, given that it does not appear in (\ref{jdens}). For that reason, this step is only performed conceptually and points from $N^*$ above $\lambda^*$ are sampled retrospectively, if required (when the proposal value for $\lambda$ leads to a higher value of $\lambda^*$), from a $PP(1)$.

For the points below $\lambda^*$, we split $S$ into $L$ regular squares (assuming that $S$ is a rectangle) and update $N^*$ in each of the respective cylinders separately. Standard properties of Poisson processes imply that, under the pseudo-marginal proposal, $N$ is mutually independent among the $L$ cylinders and follows a $PP(\lambda^*)$ in each of them. This splitting strategy imposes an optimal scaling problem w.r.t. $L$. Empirical analyzes for several simulated examples (some of which are presented in Section \ref{sec_sim}) suggest that the value of $L$ should be chosen so that the average acceptance rate among the $L$ squares is around 0.8.

We shall refer to $N$ restricted to the $l$-th square as $N_l$. A move $N_l\rightarrow \ddot{N}_l$ is accepted with probability
\begin{equation}\label{DTPMN2}
\ds \alpha_{N_l}=  \left(\prod_{k=1}^{K} \frac{\delta\lambda_{M} - \lambda_k}{\delta\lambda_{M} - \lambda_{m}}\right)^{|\ddot{N}_l|-|N_l|},
\end{equation}
where $|N_l|$ is the number of points from $N_l$.

\subsection{Sampling $\beta$}

The latent process $\beta$ is sampled retrospectively, due to its infinite dimensionality. This means that it is sampled at a finite collection of locations which are enough to perform all the steps of the MCMC algorithm. This collection is defined by the locations of $\mathcal{S}$, $Y$ and $N$, with the third one changing along the MCMC on the update steps of $N^*$ and $\lambda$.

The first important fact to be noted here is that we are unable to sample $\beta$ directly from its full conditional distribution. Second, the proposal for $\beta$ has to be such that the expression of the acceptance probability in (\ref{jdens}) can be analytically computed. More specifically, we need a proposal distribution that cancels out the term $\pi_{GP}(\beta)$.

The conditional independence structure of the NNGP demands extra care to specify its proposal distribution. For example, it is unwise to define a proposal that, at each iteration of the MCMC, fixes $\beta$ at a random finite collection of points from $S$ and propose the remainder from the NNGP prior. Conditional distributions under the NNGP that do not follow the ordering in $\mathcal{S}$ will not benefit from the conditional independence structure to have computational gains.

A adopt a non-centered random walk proposal for $\beta$. More specifically, a move $\beta\rightarrow\ddot{\beta}$ is proposed from:
\begin{eqnarray}\label{mod_rw_b}
\ddot{\beta}(s) &=& \sqrt{1-\varsigma^2}\beta(s)+\varsigma\varepsilon(s),\;\;s\in S, \\
\varepsilon &\sim& NNGP(0,\tilde{\Sigma}). \nonumber
\end{eqnarray}
This proposal is called the preconditioned Crank–Nicolson proposal (pCN) and was introduced by \citet{cotter}, not in an NNGP context. In a finite-dimensional context, the pCN proposal differs slightly from the traditional centered random walk but, unlike the latter, leads to an acceptance probability that does not depend on the prior density of the component being updated. Furthermore, the pCN proposal is valid also in the infinite-dimensional context, as defined in (\ref{mod_rw_b}), whereas the centered random walk is not. The proposal variance $\varsigma^2$ is chosen so to have an acceptance rate of approximately 0.234 \citep[see][]{cotter}.

The acceptance probability of a move $\beta\rightarrow\ddot{\beta}$ is given by
\begin{equation}\label{a.p.beta}
\ds \alpha_{\beta}=1 \wedge
\left(\ds\prod_{k=1}^{K} \left( \frac{\delta\lambda_{M} - \lambda_k}{\delta\lambda_{M} - \lambda_{m}}\right)^{|\ddot{N}_k|-|N_k|} \left(\lambda_k \right)^{|\ddot{Y}_k|-|Y_k|}\right),
\end{equation}
where $|N_k|$ and $|Y_k|$ are the respective values obtained from $\beta$ and $|\ddot{N}_k|$ and $|\ddot{Y}_k|$ are the respective values obtained from $\ddot{\beta}$.

\subsection{Sampling $\lambda$ and $c$}

The vector $\lambda$ is sampled jointly from a proposal given by a Gaussian random walk with a properly tuned covariance matrix that is adapted, based on the respective empirical covariance matrix of the chain, up to a certain iteration \citep[see][]{roberts2009examples} so to have the desired acceptance rate - varying from 0.4 to 0.234 according to the dimension of $\lambda$. The acceptance probability of a move $\lambda\rightarrow\ddot{\lambda}$ is given by\\
\begin{equation}\label{a.p.lambda}
\ds \alpha_{\lambda}=1 \wedge \left(e^{-\mu(S)(\ddot{\lambda}_m-\lambda_m)} \left[\displaystyle\prod_{k=1}^{K}
\frac{\left( \frac{\delta\ddot{\lambda}_M - \ddot{\lambda}_k}{\delta\ddot{\lambda}_M - \ddot{\lambda}_m}\right)^{|\ddot{N}_k|}}
{\left( \frac{\delta\lambda_M -\lambda_k}{\delta\lambda_M - \lambda_m}\right)^{|N_k|}} \left(\frac{\ddot{\lambda}_k}{\lambda_k}\right)^{|Y_k|}\right]\frac{\pi(\ddot{\lambda})}{\pi(\lambda)} \right),
\end{equation}
where $\pi(\lambda)$ is the prior density of $\lambda$ to be defined in Section \ref{subseccov}. Also, $\ddot{N}_k$ is the respective value obtained from $\ddot{N}=g(N^{*},\ddot{\lambda}^*)$ and $\ddot{\lambda}^*=(\delta\ddot{\lambda}_M-\ddot{\lambda}_m)$.

The parameter vector $c$ is jointly sampled from a uniform random walk proposal with a common (and properly tuned) length for each of its components. If the ordering of the proposed values is preserved, a move $c\rightarrow\ddot{c}$ is accepted with probability
\begin{equation}\label{a.p.c}
\ds \alpha_{c}=1 \wedge
\left(
\displaystyle\prod_{k=1}^{K} \left( \frac{\delta\lambda_{M} - \lambda_k}{\delta\lambda_{M} - \lambda_{m}}\right)^{|\ddot{N}_k|-|N_k|} \left(\lambda_k \right)^{|\ddot{Y}_k|-|Y_k|}
\right),
\end{equation}
where $\ddot{N}_k$ is the respective value obtained from the $S_k$ region defined by $\ddot{c}$.

\subsection{Computational aspects}

\subsubsection{Covariance function and model identifiability}\label{subseccov}

The specification of the covariance function $\Sigma(\sigma^2,\tau^2)$ of the parent process in the NNGP prior plays an important role in the proposed methodology. Empirical results of several simulated examples (omitted here) suggest that the powered exponential with an exponent close to 2 is a good and robust choice. This is given by

\begin{equation}\label{a.p.lambda}
\ds Cov(\beta(s),\beta(s'))=\exp\left\{ -\frac{1}{2\tau^2}|s-s'|^{\gamma} \right\},
\end{equation}
where $|s-s'|$ is the Euclidian distance between $s$ and $s'$ and we set $\gamma=1.95$. The specification of the parameter $\tau^2$ is related to the smoothness of the estimated IF and to model identifiability issues as discussed next.

We call the reader's attention to the fact that the Poisson process likelihood in (\ref{eq_lik}) is ill-posed. Note that the likelihood function increases indefinitely as the IF increases in balls centered around the observations, with these balls getting smaller, and approaches zero outside them. The Cox process formulation is a way to regularize the likelihood function by assigning a prior to the IF which is, in our particular level-set formulation, a non-parametric prior. Naturally, this prior has a great impact on the resulting posterior distribution. In particular, Bayes theorem implies that the posterior distribution of $\beta$ is absolutely continuous w.r.t. its prior, implying that all the almost surely properties under the prior are preserved under the posterior. Due to the aforementioned misbehavior pattern of the likelihood function, the information contained in the data about the likelihood will favor values of the IF that go in the direction of the characteristics described at the beginning of this paragraph. As a consequence, the likelihood will favor smaller values of the smoothness parameter $\tau^2$ - that make the IF less smooth. For that reason, fixing the value of $\tau^2$ is a reasonable strategy. The value of this parameter will determine the smoothness of the estimated IF and should therefore be chosen based on the researcher's preference. We believe that, typically, partitions with very small regions should be avoided. In such cases, it might be more reasonable to resort to continuously varying intensity functions. Another reason why the estimation of $\tau^2$ should be avoided is the fact that the full conditional distribution of this parameter would depend on the joint density of the NNGP at all the locations of $Y$ and $N$ and, as the locations of $N$ are latent (missing data) and numerous, the mixing of the MCMC chain could be seriously compromised. The choice of $\tau^2$ is discussed and illustrated in the simulated examples in Section \ref{subsecsens}.

Model identifiability issues may arise from the existence of local modes in the posterior density, especially for small datasets, and the prior information on the IF may not be enough to avoid the existence of significant local modes. A reasonable way to mitigate that is by adding extra coherent prior information in the model through the prior distribution of $\lambda$. Under a model parsimony perspective, it is reasonable to fit LSCPs with fewer levels and clearly distinct rate values than with more levels with similar rate values. One way to introduce this information in the model and, consequently, improve model identifiability, is by adopting a joint repulsive prior for $\lambda$ such that the $\lambda_k$'s tend to repel each other. We define a repulse prior based on the $Rep$ distribution proposed in \citet{quintana20}. However, instead of directly penalizing the differences between the $\lambda_k$'s, we consider a scaled version of those differences, as follows.
\begin{eqnarray}\label{prior_lambda}
\pi(\lambda) &\propto& \left[\prod_{i=1}^K\pi_G(\lambda_k)\right]R(\lambda;\rho,\nu), \nonumber\\
\pi_G(\lambda_k) &\propto& \lambda_{k}^{\alpha_k-1}e^{-\eta_k\lambda_k},\;\alpha_k>0,\;\eta_k>0,\;k=1,\ldots,K, \nonumber \\
R(\lambda;\rho,\nu) &=& \prod_{1\leq k_1<k_2\leq K}\left(1-\exp\left\{-\rho\left(\frac{|\lambda_{k_1}-\lambda_{k_2}|}{\sqrt{\lambda_{k_1}+\lambda_{k_2}}}\right)^{\nu}\right\}\right).
\end{eqnarray}
We shall call this the repulsive gamma prior with notation $RG(\alpha,\eta,\rho,\nu)$.
The scale factor $\sqrt{\lambda_{k_1}+\lambda_{k_2}}$ is meant to penalize the proximity of the $\lambda_k$'s considering the scale of the Poisson distribution. For example, it is not reasonable to equally penalize the pairs $(5,2)$ and $(13,10)$. Note that the scale fator is the sum of the standard deviations of the Poisson distributions with means $\lambda_{k_1}$ and $\lambda_{k_2}$. Results from simulated studies with different combinations suggest that $\rho\in[1,5]$ and $\nu=3$ are reasonable choices. The plot of the penalizing factor $r(x)=\left(1-\exp\left\{-\rho x^{\nu}\right\}\right)$ is shown in Figure \ref{figap1} in Appendix C. Note that the repulsive prior is proper since $\pi_G$ is a probability density and $R(\lambda;\rho,\nu)$ is bounded.

The RG prior on $\lambda$ may be useful to identify the most suitable value of $K$ to be used. This value is typically chosen based on an empirical analysis of the kernel smoothing estimates by analyzing aspects of the estimated IF such as minimum and maximum values, homogeneity and estimated value in regions of the space domain, variation across the spatial domain. These aspects ought to be interpreted in terms of the standard deviation of the Poisson distribution. Naturally, this is an empirical strategy and it might be wise to fit the model for different values of $K$. In this case, the RG prior can be very useful to indicate if a chosen value of $K$ is higher than necessary. As the prior repulses the values of the $\lambda_K$'s, the area of one or more regions in the partition may be estimated to be zero, effectively meaning that $K$ should be smaller. This happened to all the real examples analyzed in Section \ref{sec_app_spt}. Finally, model selection criteria can also be used to choose $K$, as it is shown in Section \ref{model_fit}.

\subsubsection{MCMC virtual updates}

Despite the NNGP prior, the computational cost of the MCMC algorithm may still be compromised by a large accumulation of points from $\beta$ resulting from successive rejections on the update step of this component and the simulation of extra points on the update steps of $\lambda$ and $N^*$.

The infinite dimensionality of $\beta$ and the retrospective sampling context provide an elegant and efficient solution for this problem. We add virtual update steps to the MCMC algorithm that update $\beta$ in $S\setminus\{\mathcal{S},Y,N\}$. Since the acceptance probability (\ref{jdens}) of the pseudo-marginal algorithm does not depend on $\beta$ at those locations, the proposal is accepted with probability 1. Furthermore, the retrospective sampling approach implies that those steps consist of simply deleting all the stored values of $\beta$ at $S\setminus\{\mathcal{S},Y,N\}$, justifying the term ``virtual" step. A virtual update is performed in-between every block update of the Gibbs sampling as long as the set of sampled locations of $\beta$ in $S\setminus\{\mathcal{S},Y,N\}$ is not empty at that moment of the algorithm.

\subsubsection{Other important issues}\label{susec_issues}

The choice of the initial values of the MCMC algorithm plays an important role to determine the efficiency of the algorithm in terms of mixing and estimation. Results from simulation studies suggest that it is reasonable to generate the initial values of $\beta$ from its NNGP prior and set $\lambda_{k}=|Y|/\mu(S)$, for all $k$. Typically, the ordering of the $\lambda_k$ parameters assumed in the first iterations of the MCMC does not change along the chain. This ordering will depend on the initial value of the GP and may not be the best one. Therefore, it may be convenient to impose a fixed ordering to the $\lambda_k$ parameters. Conditional on the initial values of $\lambda$, the initial value of $N$ is generated from pseudo-marginal proposal $q(N^*)$.

The choice of $\delta$ is also investigated in simulation studies under the trade-off defined by the fact that an increase in $\delta$ improves the mixing of the MCMC algorithm but also increases the computational cost per iteration of the algorithm. Results indicate that the distribution of the pseudo-marginal estimator has a very heavy tail to the right and that a choice of $\delta$ based on the mean number of points of $N$ is reasonably robust among different models and datasets. In particular, results were good for a variety of examples (some of which are presented in Section \ref{subsecsens}) when the mean number of points from $N$, under the pseudo-marginal distribution, was around 6000. This is a valid result for any scale and shape of the domain $S$. Note that both the function $M$ to be estimated by the pseudo-marginal estimator and $E[|N|]$ (under the pseudo-marginal distribution) depend on the areas of the partition regions only through the mean number of events in each region. Also, the variance of $\hat{M}$ depends on those areas through the mean number of events in each one and the relative differences $(\delta\lambda_M-\lambda_k)/(\delta\lambda_M-\lambda_m)$.

The proposed MCMC algorithm is highly parallelizable due to the conditional independence properties of the NNGP prior on $\beta$. In particular, the following two very expensive steps of the MCMC algorithm can be performed in parallel: simulation of $\beta$ at the $Y$ and $N$ locations (conditional on $\beta_{\mathcal{S}}$); the update of $N$ in each of the $L$ squares. This means that parallelization leads to huge computational gains and the running time of the algorithm is heavily influenced by the number of cores available.

Considering all the algorithms and issues described in this section, the MCMC algorithm to sample from the posterior distribution for the level-set Cox process model is as presented in Appendix B. The parallelizable steps are indicated as such.

\subsection{Spatiotemporal extension}

The level-set Cox process model proposed in Section \ref{sec_model} can be extended to a spatiotemporal context in which the data can be seen as a time series of point processes in a common space $S$ in discrete time. The temporal dependence is defined by a spatiotemporal Gaussian process and, possibly, a temporal structure for the level parameters of the IF. Conditional on those components, the observed Poisson process is independent among different times. We consider a particular case of the well-known Dynamic Gaussian processes (DGP) to model the temporal dependency among the random partitions. DGPs are a wide and flexible family of spatiotemporal Gaussian processes \citep[see][]{dani4}.

Suppose that the point process $Y$ is observed at $T+1$ times - $0,\ldots,T$, where $Y_t=\{Y_t(s): s \in S\}$ is a Poisson process with IF $\lambda_{t,S}=\{\lambda_t(s): s \in S\}$.
For each time $t$, we define a finite partition $\mathbf{S}_{t,K}=\{S_{t1},\ldots,S_{t,K}\}$, $K\in \mathbb{N}$, of $S$ and a sequence $c=(c_1,\ldots, c_{K-1}) \in \mathbb{R}^{K-1}$, with $-\infty=c_0 < c_1 < \ldots < c_{K-1}< c_K=\infty$. The spatiotemporal model is defined as follows.
\begin{eqnarray}
\ds (Y_t|\lambda_{t,S}) &\stackrel{ind.}{\sim}& PP(\lambda_{t,S}),\;t=0,\ldots,T,   \\
\lambda_t(s) &=& \displaystyle\displaystyle\sum_{k=1}^{K} \lambda_{t,k} I_{t,k}(s),\;s\in S,\;t=0,\ldots,T,  \\
S_{t,k} &=& \{ s\ \in\ S;\; c_{k-1} < \beta_t(s) < c_{k}    \},\; k=1,\ldots,K,\;t=0,\ldots,T, \\
\beta=(\beta_0,\ldots,\beta_T) &\sim& DNNGP (\mu,\Sigma(\sigma^2,\tau^2),\Sigma(\xi^2,\varrho^2)),\\
c &\sim& \mathds{1}(c_1<\ldots<c_{K-1})\\
\lambda_{k}=(\lambda_{0,k},\ldots\lambda_{T,k}) &{\stackrel{\textrm{ind}}{\sim}}& NGAR1(a_{0,k}, b_{0,k}, w_k, a_k), \; k=1,\ldots,K,
\end{eqnarray}
where $I_{t,k}(s)$ is the indicator of $\{s \in S_{t,k}\}$, $DNNGP$ is a dynamic NNGP and $NGAR1$ is an order 1 non-Gaussian non-linear autoregressive model. We consider a DNNGP of the form:
\begin{eqnarray}
  \beta_0 &\sim& NNGP(\mu,\tilde{\Sigma}(\sigma^2,\tau^2)), \label{stm1}\\
  \beta_t(s) &=& \beta_{t-1}(s) + \zeta_t(s),\;s\in \mathcal{S},\;t=1,\ldots,T, \label{stm2}\\
  \zeta_t &\sim& NNGP(0,\tilde{\Sigma}(\xi^2,\varrho^2)),\label{stm3} \\
  (\beta_t(s)|\beta_{t,\mathcal{S}}) &\sim& NNGP(\mu,\tilde{\Sigma}(\sigma^2,\tau^2)),\;s\in S\setminus\mathcal{S},\;t=1,\ldots,T, \label{stm4}
\end{eqnarray}
As in the case of the spatial model, we set $\mu=0$, $\sigma^2=1$ and fix $\tau^2$ at a suitable value. Parameters $\xi^2$ and $\varrho^2$ are also fixed such that $\xi^2\leq\sigma^2$ and $\varrho^2\geq\tau^2$. The DNNGP in (\ref{stm1})-(\ref{stm4}) differs from the one in \citet{NNGP} in the distribution of $(\beta_t(s)|\beta_{t,\mathcal{S}})$. \citet{NNGP} consider the formulation in (\ref{stm1})-(\ref{stm3}) for all $s\in S$. Our modification is motivated by the fact that we require to sample $\beta$ at different sets of locations ($N$ and $Y$) for each time $t$ and the DNNGP from \citet{NNGP} would be computationally inefficient in this case. Note that, under our DNNGP, the temporal dependence is explicit only in $\mathcal{S}$ and, conditional on $\beta$ at those locations, the remainder in conditionally independent w.r.t. time.

We define the following $NGAR1$ model.
\begin{eqnarray}
  \lambda_{0} &\sim& RG(a_{0}, b_{0}, \rho, \nu), \label{stm5}\\
  \lambda_{t,k} &=& w_{k}^{-1}\lambda_{t-1,k}\epsilon_{t,k},\;t=1,\ldots,T,\;k=1\ldots,K,\label{stm6} \\
  \epsilon_{t,k} &\sim& Beta(w_ka_k,(1-w_k)a_k),\label{stm7}
\end{eqnarray}
where $RG$ is the repulsive gamma prior defined in (\ref{prior_lambda}) and parameters $a_{0}$, $b_{0}$, $\rho$, $\nu$, $w_k$ and $a_k$ are fixed at suitable values. This $NGAR1$ model imposes a random walk type structure to the logarithm of $\lambda_{t,k}$ and is inspired by the non-Gaussian state space model proposed in \citet{smithmilner}.

A temporal structure can be considered to model both the random partitions and the levels of the IF as in the model above or to model only the former, in which case the $\lambda_{t,k}$ parameters are all independent with repulsive gamma priors.


Inference for the spatiotemporal level-sex Cox process model requires some adaptations to the MCMC algorithm proposed for the spatial model. First, we need to define one pseudo-marginal estimator for the likelihood of $Y$ in each time $t$ and we shall define the respective auxiliary variables as $N^*=(N_{0}^*,\ldots,N_{T}^*)$. Now, each $N_{t}^*$ is an independent unit rate Poisson process on the infinite height cylinder with base $S$ and $N_t=g(N_{t}^{*},\lambda_{t}^*)$, $\lambda_{t}^*= (\delta_t\lambda_{t,M} -\lambda_{t,m})$, $\lambda_{t,M}=\underset{k}{\max}\{\lambda_{t,k}\}$ and $\lambda_{t,m}=\underset{k}{\min}\{\lambda_{t,k}\}$. Furthermore, whenever $\beta$ is to be sampled retrospectively on the update steps of the $\lambda_k$'s and $N^*$, it is sampled from the spatiotemporal NNGP prior and, therefore, conditional on all the locations of $\beta$ already sampled at all times 0 to $T$, considering the respective conditional independence structure.

The sampling step of $N^*$ is performed analogously to the spatial case at each time $t$, independently. Parameter $c$ uses the same proposal of the spatial case and accepts a move
$c\rightarrow\ddot{c}$ with probability
\begin{equation}
\ds \alpha_{c}=1 \wedge
\left(
\displaystyle\prod_{t=0}^{T}\displaystyle\prod_{k=1}^{K} \left( \frac{\delta_t\lambda_{tM} - \lambda_{t,k}}{\delta_t\lambda_{tM} - \lambda_{tm}}\right)^{|\ddot{N}_{t,k}|-|N_{t,k}|} \left(\lambda_{t,k} \right)^{|\ddot{Y}_{t,k}|-|Y_{t,k}|}
\right).
\end{equation}

Process $\beta$ is proposed from the following spatiotemporal pCN proposal.
\begin{eqnarray}\label{mod_rw_b2}
\ddot{\beta}_{t} &=& \sqrt{1-\varsigma^2}\beta_{t}+\varsigma\varepsilon_{t},\;t=0,\ldots,T, \nonumber \\
(\varepsilon_0,\ldots,\varepsilon_T) &\sim& DNNGP (0,\Sigma(\sigma^2,\tau^2),\Sigma(\xi^2,\varrho^2)), \nonumber
\end{eqnarray}
The proposal variance $\varsigma^2$ is chosen so to have an acceptance rate of approximately 0.234 \citep{cotter}. The acceptance probability of a move $\beta\rightarrow\ddot{\beta}$ is given by
\begin{equation}
\ds \alpha_{\beta}=1 \wedge
\left(\ds\prod_{t=0}^{T}\displaystyle\prod_{k=1}^{K} \left( \frac{\delta_t\lambda_{tM} - \lambda_{t,k}}{\delta_t\lambda_{tM} - \lambda_{tm}}\right)^{|\ddot{N}_{t,k}|-|N_{t,k}|} \left(\lambda_{t,k} \right)^{|\ddot{Y}_{t,k}|-|Y_{t,k}|}\right).
\end{equation}

Finally, if no temporal dependence structure is considered for the $\lambda_k$'s, the same algorithm from the spatial model is considered independently at each time $t$ to sample $(\lambda_{t,1},\ldots,\lambda_{t,K})$. If however the $NGAR1$ prior is adopted, the $\lambda_k$'s are jointly sampled by proposing from a properly tuned Gaussian random walk.
The respective acceptance probability is given by
\begin{equation}
\ds \alpha_{\lambda}=1 \wedge \left(\left[\displaystyle\prod_{t=0}^{T}e^{-\mu(S)(\ddot{\lambda}_{t,m}-\lambda_{t,m})} \displaystyle\prod_{k=1}^{K}
\frac{\left( \frac{\delta_t\ddot{\lambda}_{t,M} - \ddot{\lambda}_{t,k}}{\delta_t\ddot{\lambda}_{t,M} - \ddot{\lambda}_{t,m}}\right)^{|\ddot{N}_{t,k}|}}
{\left( \frac{\delta_t\lambda_{t,M} -\lambda_{t,k}}{\delta_t\lambda_{t,M} - \lambda_{t,m}}\right)^{|N_{t,k}|}} \left(\frac{\ddot{\lambda}_{t,k}}{\lambda_{t,k}}\right)^{|Y_{t,k}|}\right]
\frac{\pi(\ddot{\lambda})}{\pi(\lambda)}\right),
\end{equation}
where $\pi(\lambda)$ is the density of the $NGAR1$ prior.

\subsection{Prediction}

It is often the case that the analysis of point process phenomena also aims at predicting unknown quantities conditional on the data. In the spatial context, this may include some functional of the IF in a given region of $S$ or future replications of the observed process. In the spatiotemporal context, prediction about future times may be considered. In both cases, it is straightforward to obtain a sample from the desired posterior predictive distribution based on the output of the MCMC.

The algorithm to sample from the predictive distribution of a function $h(\lambda_{S})$ under the spatial model consists of computing $h(\lambda_{S})$ for each sampled value of $\lambda_{S}$ in the MCMC chain (after a reasonable burn-in). If $h(\lambda_{S})$ is intractable it may still be possible to obtain a sample from the predictive distribution of an unbiased estimator of $h(\lambda_{S})$. For example, define $h(\lambda_{S})=\int_{S_0}\lambda(s)ds:=\Lambda_{S_0}$, for some known $S_0\subset S$, and $U\stackrel{ind.}{\sim}Unif(S_0)$. Then, an unbiased estimator of $h(\lambda_{S})$ is given by \citep[see][Section 4.3]{G&G}
\begin{equation}\label{eq_uestint}
\hat{h}=\mu(S_0)\lambda(U).
\end{equation}
A sample from the predictive distribution of (\ref{eq_uestint}) is obtained by sampling $U_i\sim Unif(S)$ and $\lambda(U_i)$ on each iteration of the MCMC.

To perform prediction for replications of $Y$ it is enough to simulate $Y$ conditional on each sampled value of $\lambda_{S}$ in the MCMC using a Poisson thinning algorithm.
This consist of simulating a $PP(\lambda_M)$ on $S$ and keeping each point $s$ with probability $\lambda(s)/\lambda_M$, where the value of $\lambda(s)$ is obtained by sampling $\beta(s)$ retrospectively from the GP prior on each MCMC iteration.

Now consider the full Bayesian model of a level-set Cox process $Y$ in $S$ for times $0,\ldots,T,\ldots,T+d$, $d\in\mathds{N}$, and let $y$ be a realization of the process at times $0,\ldots,T$. Define $h(Y,\lambda_d)$ to be some measurable function, in the probability space of the full Bayesian model, that depends on $(Y_t,\lambda_{t,S})$ only for times $t\in\{T+1,\ldots,T+d\}$. Then, prediction about $h(Y,\lambda_d)$ is made through the predictive distribution of $(h(Y,\lambda_d)|y)$. This is sampled by simulating $h(Y,\lambda_d)$, conditional on the output of the MCMC on each iteration, based on the following identity.\\
\begin{equation}\label{eq_pred}
\pi(h(Y,\lambda_d)|y)=\int \pi(h(Y,\lambda_d)|\lambda_{0:T},y)\pi(\lambda_{0:T}|y)d\lambda_{0:T}.
\end{equation}
Appealing examples of $h(Y,\lambda_d)$ include:
\begin{enumerate}[i.]
  \item $\ds(\lambda_{T+1,S},\ldots,\lambda_{T+d,S})$;
  \item $\ds \Lambda_{S,d}=\int_{S}\lambda_t(s)ds$, for $t=T+1,\ldots,T+d$;
  \item $\ds(Y_{T+1},\ldots,Y_{T+d})$.
\end{enumerate}

\section{Simulated examples}\label{sec_sim}

We perform a series of simulation studies to investigate the main issues regarding the methodology proposed in this paper. First, we present a sensitive analysis w.r.t. the prior specifications of the covariance function of the Gaussian process prior for two models that differ in terms of the size of the dataset. Then, we explore the choice of the number of levels $K$. All the examples presented in this paper are implemented in Ox \citep{doornik2009object} and run in an i7 3.50GHz processor with 6 cores (12 threads) and 16GB RAM. Codes for the spatial and spatiotemporal models are available at \emph{https://github.com/fbambirra/MCMC-LSCP}.

In all the simulations, we consider the initial values $c=(-0.5,0.5)$. The initial values of $\beta$, $\lambda$ and $N$ are set as described in Section \ref{susec_issues}. The repulsive Gamma prior is adopted for $\lambda$ with $\alpha_k=1.2$, $\eta_k=0.04$, for all $k$, $\rho=1$ and $\nu=3$. The value of $\delta$ is set to have $|N|\approx6000$. The efficiency of the proposed methodology is investigated in terms of estimation and computational cost.

\subsection{Sensitivity analysis}\label{subsecsens}

We consider three examples with $K=3$ and the same partition for the piecewise structure of the IF. The three examples differ in the values of $\lambda$. We call them examples 1, 2 and 3, having true $\lambda$ equals to $(1,\;4,\;12)$, $(3,\;20,\;50)$ and $(20,\;50,\;100)$, respectively. We consider one replication of example 3, which has 5267 observations, to illustrate the applicability of our methodology to large datasets and 10 replications of examples 1 (around 500 observations) and 2 (around 2200 observations). Figure \ref{figsap5} presents the true IF for examples 1 and 2 and Figure \ref{figsap4b} in Appendix D presents the true IF for example 3.

In the first sensitivity analysis, we compare the results for one replication of examples 1 and 2, for $\tau^2=0.5$, 1, and 2. Figure \ref{figap1} in Appendix C shows a plot of the three respective correlation functions. Comparison is performed in terms of the computational cost and estimation of the intensity function. Tables \ref{tab1} and \ref{tab2} show some of the results and Figure \ref{figsap5} shows the estimated intensity function. We can clearly see that as the value of $\tau^2$ increases, the estimated IF gets smoother, as expected. Values 1 and 2 provided quite a good recovery of the true IF, with the latter performing a bit better in example 1 and the former performing a bit better for example 2 (see the left-bottom part of the area with the highest IF). To compare the results for the replications of both examples, we choose a common value $\tau^2=1$ to analyze the 10 replications of examples 1 and 2 and the one replication of example 3. The estimated IFs for all the replications are presented in Figures \ref{figsap7} and \ref{figsap8} in Appendix D. Some posterior statistics are presented in Table \ref{tab3}.

Each MCMC chain runs for 300 thousand iterations and the average running time is around 15.5 hours for example 1 and around 19.5 hours for example 2, with very small variations among replications and different values of $\tau^2$, and around 48 hours for example 3. The effective sample sizes (ESS) reported here are computed with the R package CODA \citep{CODA}. Trace plots and autocorrelation plots for the case where $\tau^2=1$ are presented in Figure \ref{figap6} in Appendix D and strongly suggest the convergence of the algorithm.

\begin{figure}[h!]
	\centering{
		\includegraphics[width=0.9\linewidth]{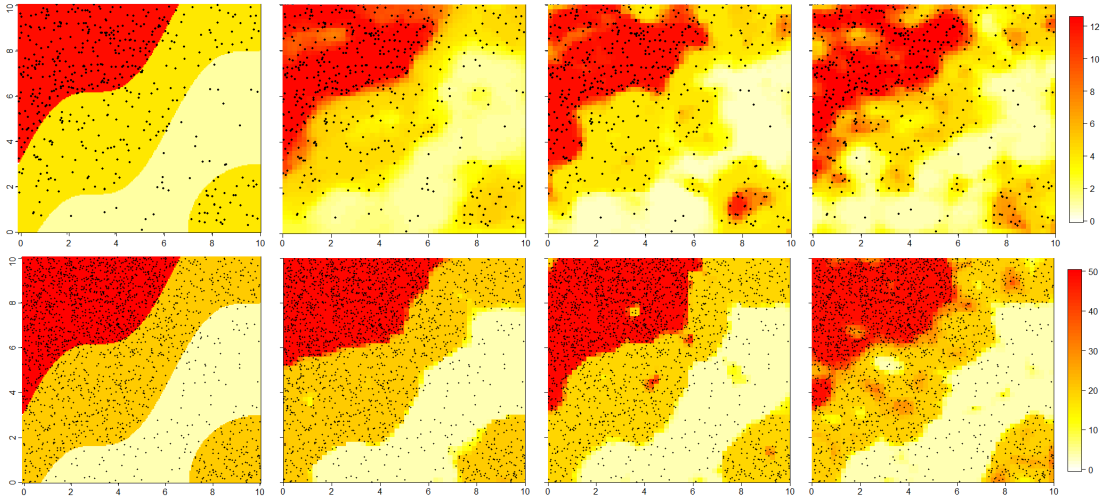}
		\caption{True IF (1st column) and its posterior mean for examples 1 (top) and 2 (bottom), for $\tau^2=$ 0.5 (2nd column), 1 (3rd column) and 2 (4th column).}
		\label{figsap5}}
\end{figure}

\begin{table}[h!]
  \centering \setlength\tabcolsep{2pt}
\begin{tabular}{|c|c|c|c|}
  \hline
  $\tau^2$ & 2 & 1 & 0.5 \\ \hline
  ESS            &  765  &  693  &  559   \\ \hline
  $\lambda_1=1$  &  1.15(0.29) &  0.67(0.18) &  0.91(0.18)  \\
  $\lambda_2=4$  &  4.51(0.46) &  3.99(0.31) &  4.50(0.62)  \\
  $\lambda_3=12$ & 12.18(0.80) & 11.97(0.74) & 12.38(0.79)  \\
  \hline
\end{tabular}
  \caption{Results for example 1. Second row reports the ESS of the log pseudo-marginal likelihood. The remaining rows show the posterior mean and standard deviation of the $\lambda_k$ parameters.}\label{tab1}
\end{table}

\begin{table}[h!]
  \centering \setlength\tabcolsep{2pt}
\begin{tabular}{|c|c|c|c|}
  \hline
  $\tau^2$ & 2 & 1 & 0.5 \\ \hline
  ESS            &  906 & 1386 &   1876  \\ \hline
  $\lambda_1=3$  &  3.52(0.34) &  3.60(0.34) &  5.82(0.40)  \\
  $\lambda_2=20$ & 20.04(0.72) & 19.09(0.72) &  21.44(0.67)  \\
  $\lambda_3=50$ & 49.19(1.49) & 48.45(1.44) &  48.91(1.50)  \\
  \hline
\end{tabular}
  \caption{Results for example 2. Second row reports the ESS of the log pseudo-marginal likelihood. The remaining rows show the posterior mean and standard deviation of the $\lambda_k$ parameters.}\label{tab2}
\end{table}

\begin{table}[h!]
  \centering \setlength\tabcolsep{2pt}
\begin{tabular}{|c|c|c|c|c|c|c|}
  \hline
             &   \multicolumn{2}{c|}{Example 1}  & \multicolumn{2}{c|}{Example 2}  & \multicolumn{2}{c|}{Example 3}  \\ \hline
   aver. ESS &   \multicolumn{2}{c|}{596(237)}  & \multicolumn{2}{c|}{1103(498)}   & \multicolumn{2}{c|}{889} \\ \hline
              & True & Est. & True & Est & True & Est\\ \hline
  $\lambda_1$ & 1  & 0.84(0.21) / 0.22(0.09)   & 3  &  3.47(0.64) / 0.33(0.04) &  20 & 21.88 / 0.88 \\
  $\lambda_2$ & 4  & 4.13(0.70) / 0.42(0.15)   & 20 & 20.27(0.81) / 0.73(0.08) &  50 & 48.66 / 1.11 \\
  $\lambda_3$ & 12 & 12.59(1.32) / 0.83/(0.11) & 50 & 50.70(1.56) / 1.42(0.14) & 100 & 100.34 / 1.55 \\
  \hline
\end{tabular}
  \caption{Results for the 10 replications of examples 1 and 2 and for the 1 replication of example 3. Second row reports the mean and s.d. of the ESS of the log pseudo-marginal likelihood over the 10 replications. The remaining rows show the mean and standard deviation, over the 10 replications, of the posterior mean and s.d. of the $\lambda_k$ parameters.}\label{tab3}
\end{table}

\subsection{Model fit}\label{model_fit}

We now explore the issue of choosing the number of levels $K$. We fit the LSCP model for one replication of example 1 and one of example 2 for two values of $K$, with $\tau^2=1$. We compare the results with those obtained by the methodology proposed in \cite{geng21}, who use a mixture of finite mixtures model to detect the number of clusters $K$ and estimate the IF in each of them after discretizing the space. We consider 3 levels of discretization - $10\times10$, $15\times15$ and $20\times20$.

For the dataset of example 1, their algorithm estimates $K=3$ for $10\times10$ and $K=2$ for $15\times15$ and $20\times20$. For example 2, it estimates $K=4$ for $10\times10$ and $K=3$ for $15\times15$ and $20\times20$. The IF estimates are shown in Figure \ref{figsap10} in Appendix D.

We fit the LSCP model to the dataset of example 1 for $K=2$ and 3 and to the dataset of example 2 for $K=3$ and 4. In our case, the models are compared via DIC \citep{DIC2002}. The values of the DIC are -952.11 for $K=2$ and -1011.844 for $K=3$, for example 1, and -10319.6 for $K=3$ and -10278.7 for $K=4$, for example 2, which correctly indicates the respective true models. The estimates of the IF are shown in Figure \ref{fig10}.

\begin{figure}[h!]
	\centering{
		\includegraphics[width=0.95\linewidth]{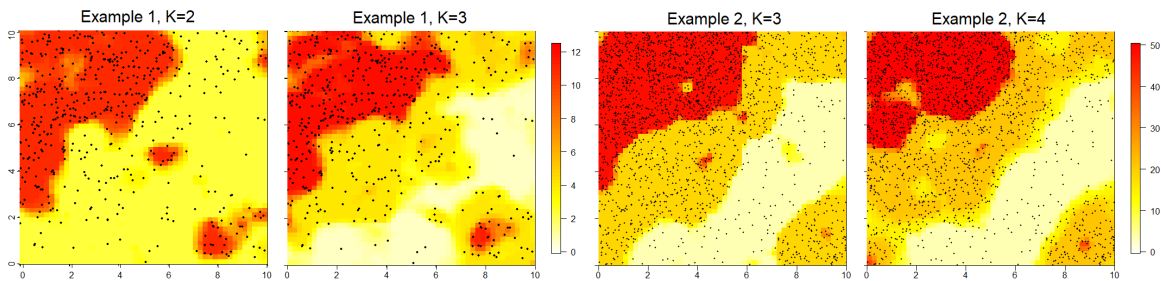}
		\caption{Estimated IF for example 1 with $K=2$ and $K=3$ and for example 2 with $K=3$ and $K=4$.}
		\label{fig10}}
\end{figure}

\begin{table}[h!]
  \centering
\begin{tabular}{|c|c|c|c|c|}
  \hline
             & \multicolumn{2}{c|}{Example 1} & \multicolumn{2}{c|}{Example 2} \\ \hline
             & $K=2$  & $K=3$  & $K=3$  & $K=4$ \\ \hline
 $\lambda_1$ &  2.17(0.20) &  0.67(0.18) &  3.60(0.34) & 3.37(0.39) \\
 $\lambda_2$ & 10.84(0.65) &  3.99(0.31) & 19.09(0.72) & 13.76(1.18) \\
 $\lambda_3$ &             & 11.97(0.74) & 48.45(1.44) & 21.45(0.84) \\
 $\lambda_4$ &             &             &             & 50.05(1.45) \\
  \hline
\end{tabular}
  \caption{Results for the sensitivity analysis regarding the specification of $K$. Posterior mean and standard deviation of the $\lambda_k$ parameters.}\label{tab3}
\end{table}

\subsection{Comparison to discrete approximation method}\label{disc_approx}

To illustrate the advantages of the exact approach of the methodology proposed in this paper, we compare it to a discretized version of the LSCP model for different levels of discretization - $20\times20$, $50\times50$ and $100\times100$. We consider the same discrete approximation as in \citet{hildeman2017level} but with a (discretized) NNGP prior for $\beta$ and the respective pCN proposal to update this coordinate via MH, and the repulsive prior for the $\lambda_k$'s. We compare the results for one of the replications of example 2 and one of the applications presented in Section \ref{sec_app_spt}. Results are shown in Figures \ref{figsap9a} and \ref{figsap9b}.

\begin{figure}[h!]
	\centering
		\includegraphics[width=0.9\linewidth]{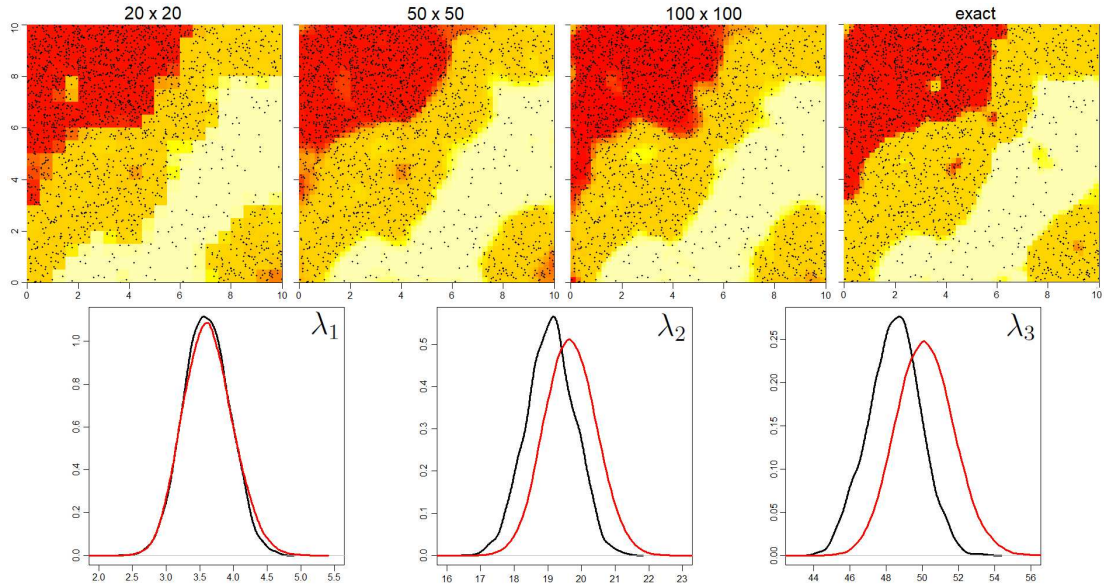}
		\caption{Discrete approximation results for example 2. Top: estimated IF for lattices $20\times20$, $50\times50$ and $100\times100$, and with the exact method. Bottom: empirical posterior density of $\lambda$ for $100\times100$ (red) and for exact method (black).}
		\label{figsap9a}
\end{figure}

\begin{figure}[h!]
	\centering
		\includegraphics[width=0.9\linewidth]{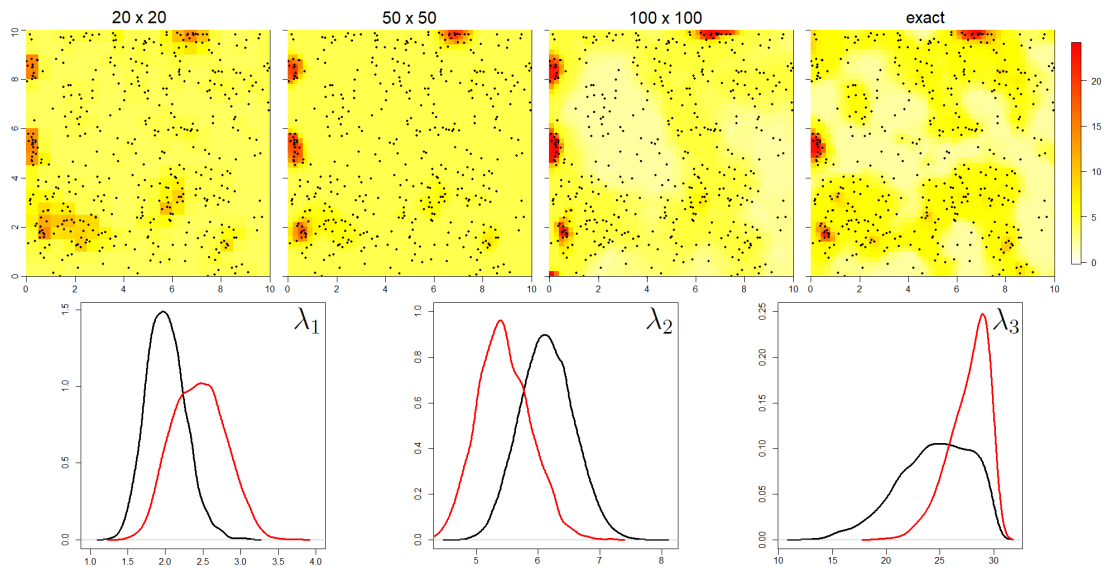}
		\caption{Discrete approximation results for the white oak example. Top: estimated IF for lattices $20\times20$, $50\times50$ and $100\times100$, and with the exact method. Bottom: empirical posterior density of $\lambda$ for $100\times100$ (red) and for exact method (black).}
		\label{figsap9b}
\end{figure}

The MCMC for the discretized method has to run for around 900 thousand iterations to obtain a reasonable MC sample for inference. This means a running time of around 25 hours for both examples with a lattice of $100\times100$. Results show that there is still a significant difference between the posterior distributions of the approximate and exact methods.

\section{Applications}\label{sec_app}

We apply the LSCP model to analyze some real point process datasets - three spatial and one spatiotemporal. The three spatial datasets consist of: 1. locations of white oak trees in some region in the USA; 2. locations of particles in a bronze filter; 3. locations of fires in a region of New Brunswick, Canada, for a period of 12 years. The spatiotemporal dataset also considers the locations of fires in New Brunswick, but disaggregates the data per periods of 3 years. All the datasets are available in the R package \texttt{spatstat} \citep{baddeley2015spatial}.

We consider the initial values $c=0$, $c=(-0.5,0.5)$ and $c=(-0.7,0,0.7)$, for $K=2$, 3 and 4, respectively. The initial values of $\beta$, $\lambda$ and $N$ are set as described in Section \ref{susec_issues}. The repulsive Gamma prior is adopted for $\lambda$ with $\alpha_k=1.2$, $\eta_k=0.04$, for all $k$, and $\nu=3$. Parameters $\rho$ from the RG prior and the range parameter $\tau^2$ vary among the examples as follows: white oak - $\rho=5$ and $\tau^2=0.5$; bronze filter - $\rho=5$ and $\tau^2=1$; fires - $\rho=1$ and $\tau^2=0.5$. Those values are based on the empirical analysis of the kernel smoothing estimates of the IF (see Figure \ref{fig1}) in terms of the levels and smoothness of the IF expected to provided a good fit. The value of $\delta$ is set to have $|N|\approx6000$.

For the first example, we also present an analysis comparing the level-set Cox process model to a Cox process model in which the IF is a continuous function of a latent Gaussian process. The latter is proposed in \citet{G&G}, who also present an exact methodology to perform Bayesian inference. Their model assumes $\lambda(s)=\lambda^*\Phi(\beta(s))$, where $\lambda^*$ is an unknown parameter, $\beta$ is a Gaussian process and $\Phi$ is the standard normal c.d.f.

\subsection{Spatial examples}\label{sec_app_spt}

We analyze a dataset regarding the locations of white oak trees in Lansing Woods, Michigan. The data consists of the location of 448 white oaks in an area of $924\times924$ feet, which we rescale to $(0,10)\times(0,10)$.

An empirical analysis based on the kernel smoothing estimation of the IF suggested that $K=3$ would be a suitable choice. Indeed, a model with $K=4$ was also fit but the area of one of the four regions converged to zero along the MCMC chain. The largest value of $\lambda$ is truncated a priori to be smaller than 30 when $K=3$. Figure \ref{fig6} shows the posterior mean and mode of the IF. The latter defines the partition using its pointwise mode and colors each region with the posterior mean of the respective $\lambda_k$.

\begin{figure}[h!]
	\centering{
		\includegraphics[width=0.9\linewidth]{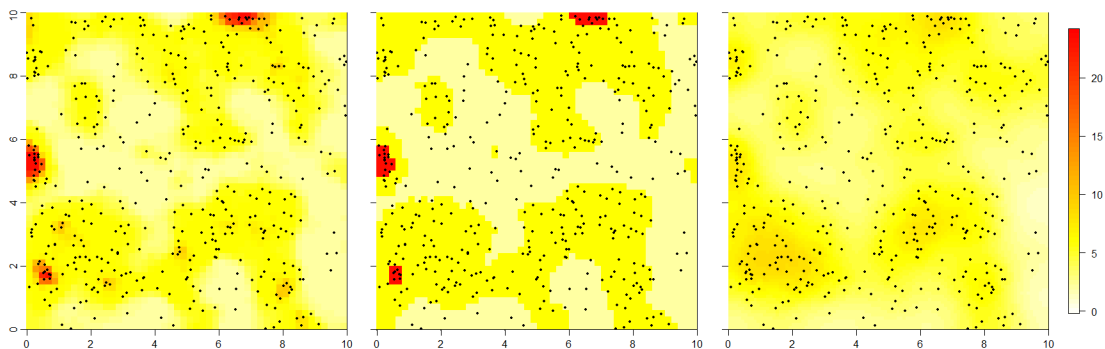}
		\caption{Posterior mean (left) and mode (middle) of the IF under the LSCP model and posterior mean (right) of the IF under the continuous IF model for the white oak example.}
		\label{fig6}}
\end{figure}

We also consider the prediction of the integrated IF in the whole observed domain and in two regions - $S_1=(5\;,\;7)\times(8\;,\;10)$ and $S_2=(8\;,\;10)\times(4.5\;,\;6.5)$, see  Table \ref{tab4}.

The results for the two models are considerably different in some aspects of the estimated IF. Generally speaking, and as expected, the estimate is smoother for the continuous IF model. The repulsive prior for the IF values in the LSCP pushes those values apart and estimates a small cluster (with a mean area around 1.66) with a much higher IF. All the 3 predicted functions of the IF have a smaller predictive variance for the LSCP, but also a slightly larger bias for the point estimates (posterior mean). If we combine the bias and variance through the expected quadratic error - $E_{\theta}[(h(\lambda_S)-true)^2]$, this is smaller for the LSCP for $\Lambda_{S}$ (70\%) and $\Lambda_{S_1}$ (88\%) and smaller for the continuous IF model for $\Lambda_{S_2}$ (92\%).

\begin{table}[h!]
	\centering
	\begin{tabular}{|c|c|c|c|}
		\hline
		&  & LSCP & Continuous IF model\\
		\hline		
		$\Lambda_{S}$   & 448 & 447.22 / 20.50 (414,482) - 421.20 & 448.44 / 24.14 (408,490) - 594.24 \\
		$\Lambda_{S_1}$ & 27  & 29.18 / 3.62 (22.91,35.18) - 17.92 & 25.35 / 4.18 (18.65,32.45) - 20.19 \\
		$\Lambda_{S_2}$ & 9   & 11.47 / 1.86 (8.65,14.87) - 9.62 & 9.98 / 2.82 (5.64,14.84) - 8.88 \\ \hline
	\end{tabular}
	\caption{Statistics of the posterior predictive distribution of the estimator in (\ref{eq_uestint}). Each cell shows: Mean / s.d. (95\% C.I.) - expected quadratic error.}
	\label{tab4}
\end{table}

The second example considers the locations of 678 particles observed in a longitudinal plane section of $18\times7$ mm through a gradient sinter filter made from bronze powder. The original area is rescaled to $(0,10)\times(0,4)$ and an empirical analysis via kernel smoothing suggests $K\approx4$. We fit the model for $K=3$ and 4 but area of one of the four regions converges to zero along the MCMC when $K=4$. The estimated IF for $K=3$ is shown in Figure \ref{fig7b}, which also brings the extra information about the radius of each particle. Although this information is not used in the analysis, the clear relation between radius and particle concentration was captured by the IF estimate.

Finally, the third example considers the locations of 2313 fires in a rectangular region (rotated 90$^o$ to the left and rescaled to $8.5\times$10) containing most of the area of New Brunswick, Canada, from 1992 to 2003. We fit the LSCP for $K=3$ and $K=4$ but the area of one of the 4 regions converges to zero along the MCMC when $K=4$. The estimated IF is presented in Figure \ref{fig7b}.

\begin{figure}[h!]
	\centering{
		\includegraphics[width=0.9\linewidth]{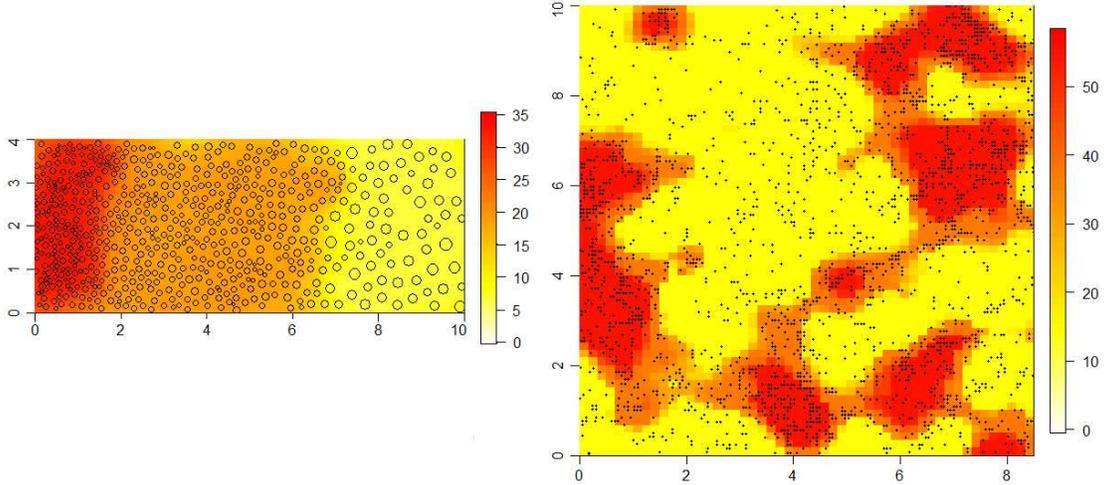}
		\caption{Posterior mean of the IF for the bronze filter example (left) and for the New Brunswick fires example (right).}
		\label{fig7b}}
\end{figure}

The estimated levels of the IF for all the 3 examples are presented in Table \ref{spex_tab}.

\begin{table}[h!]
	\centering
	\begin{tabular}{|c|c|c|c|}
		\hline
		  & White oak & Bronze filter & NB fires \\
		\hline		
		$\lambda_{1}$ & 22.48(4.63) & 33.27(2.86) & 55.11(1.95) \\
		$\lambda_{2}$ &  6.07(0.42) & 18.62(1.15) & 37.45(1.72) \\
		$\lambda_{3}$ &  1.97(0.25) &  6.47(0.79) & 13.40(0.53) \\ \hline
	\end{tabular}
	\caption{Posterior mean and standard deviation of $\lambda$ for the three spatial application.}
	\label{spex_tab}
\end{table}

The results for $K=4$ in all three examples show the impact of the repulsive gamma prior used for the $\lambda_k$ parameters. It penalizes scenarios with similar values of $\lambda_k$'s and, as the values are pushed apart, estimates one of the areas to be zero.

\subsection{Spatiotemporal example}

We consider the New Brunswick fires dataset for the years 1992 to 2003 aggregating every 3 years as one time $t$ in the model. The number of fires per each interval of 3 years is 414, 385, 450 and 415, respectively. We fit the model in (\ref{stm1})-(\ref{stm4}) with $K=3$ and perform prediction of the IF and its integral for the interval of 3 years 2004-2006. We set $\rho=5$, $\tau^2=0.5$, $\xi^2=1$, $\varrho^2=0.5$. Independent repulsive gamma priors are assumed for each $\lambda_{t,k}$ and the NGAR1 prior, with $w_k=0.5$ and $(a_1,a_2,a_3)=(5,15,30)$, for all $k$, is assumed between the respective levels from times 3 and 4 in order to perform prediction for the latter. All the other specifications are as chosen for the spatial examples. Results are shown in Table \ref{spex_tab2} and Figures \ref{fig8} and \ref{fig9}.

\begin{table}[h!]
	\centering
	\begin{tabular}{|c|c|c|c|c|c|}
		\hline
		  & $T=0$ & $T=1$ & $T=2$ & $T=3$ & $T=4$ (prediction) \\
		\hline		
		$\lambda_{1}$ & 19.17(2.40) & 13.95(1.66) & 16.92(1.80) & 14.24(1.11) & 12.73(6.83) \\
		$\lambda_{2}$ &  7.78(0.48) &  6.18(0.41) &  6.52(0.53) &  5.31(0.47) & 3.96(1.77) \\
		$\lambda_{3}$ &  1.70(0.20) &  2.51(0.22) &  2.84(0.25) &  2.02(0.21) & 1.42(0.51) \\ \hline
	\end{tabular}
	\caption{Posterior mean and standard deviation of $\lambda$ for the spatiotemporal application.}
	\label{spex_tab2}
\end{table}

\begin{figure}[h!]
	\centering{
		\includegraphics[width=0.9\linewidth]{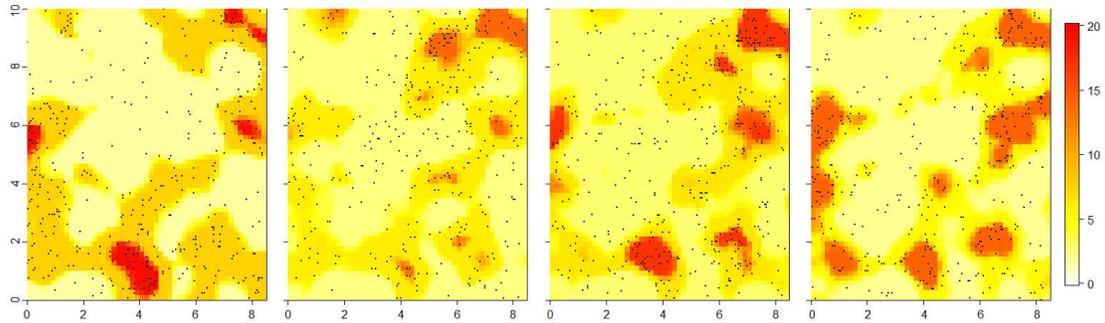}
		\caption{Posterior mean of the IF, at time 0 to 3, for the spatiotemporal example.}
		\label{fig8}}
\end{figure}

\begin{figure}[h!]
	\centering{
		\includegraphics[width=0.8\linewidth]{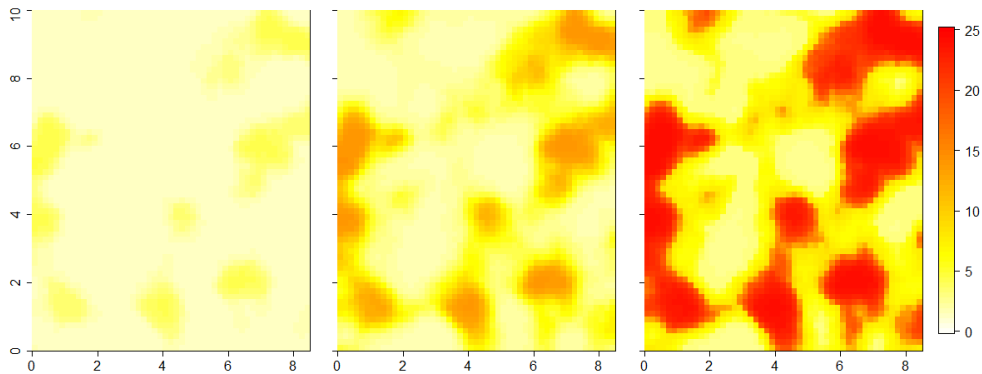}
		\caption{Predictive mean (middle) and pointwise 95\% credibility interval (left and right) of the IF in year 2004 for the spatiotemporal example.}
		\label{fig9}}
\end{figure}

\section{Conclusions}\label{sec.conc}

This paper proposed a novel methodology to perform exact Bayesian inference for a class of level-set Cox processes in which the intensity function is piecewise constants. The model is flexible enough to accommodate any smooth partition structure and aims at providing a more parsimonious alternative to Cox process models with continuously varying IF. The methodology is exact in the sense of not involving discrete finite-dimensional approximations and is the first one with this feature for the class of LSCP models.

The inference is performed via an infinite-dimensional pseudo-marginal MCMC algorithm. The MCMC chain has the exact posterior distribution of all the unknown components of the model as its invariant distribution. This means that only MCMC error is involved despite the intractability of the likelihood function and infinite dimensionality of the parameter space. Retrospective sampling and pseudo-marginal Metropolis are used to circumvent the infinite dimensionality and intractable likelihood problems, respectively. Efficient proposal distributions are carefully devised for the latent Gaussian process component and the pseudo-marginal auxiliary variable. Computational cost issues are mitigated by adopting a NNGP approach for the latent Gaussian process and by adding a virtual retrospective sampling step to the MCMC algorithm that deletes extra sampled locations of the GP component.

A variety of issues related to the efficiency of the proposed MCMC algorithm are discussed and empirically explored through simulations. Model fitting regarding the choice of the number of levels for the IF is also explored. Results show a considerably good performance of the proposed methodology.

Finally, a spatiotemporal version of the level-set Cox process model is introduced and applied to a real dataset regarding fires in the province of New Brunswick, Canada.

Some interesting directions may be pursued as future work. For example, the sensitivity of the proposed methodology to the choice of the covariance function of the latent GP. Any valid covariance function can be used within the proposed methodology, so it is natural to question if the partition estimation may benefit from more complex structures such as non-stationary ones. Three possible extensions of the proposed methodology may also be considered. First, estimating the number of levels $K$ may be particularly useful in applications in which selecting $K$ is not a trivial task. Second, the methodology from this paper can be merged with that from \citet{G&G} so that the IF is a continuous function of independent Gaussian processes, conditional on the partition, and the inference methodology is still exact. Third, one may consider the use of spatial covariates in the intensity function, for example, $\lambda(s)=\lambda_{k,0}+\lambda_{k,1}X_1(s)$, for some covariate $X_1$.


\section*{Acknowledgements}
The first author would like to thank FAPEMIG - Grant PPM-00745-18 and CNPq - Grant 310433/2020-7, for financial support. The second author would like to thank CAPES for financial support. The authors would like to thank Gareth Roberts for insightful discussions about the MCMC algorithm.

\bibliographystyle{Chicago}
\bibliography{biblio1}

\section*{Appendix A - Proofs}

\subsection*{Proof of Proposition \ref{prop1}}

	Let $I_{nk}= I_{k}(s_n)$ be the indicator of $s_n \in S_k$, where $s_n$ is the $n$-th point from $N$ and $I=(I_1,\ldots,I_{|N|})$, where  $I_n=(I_{n1},\ldots,I_{nK}) \sim Mult\left(1,\frac{\mu_1}{\mu(S)},\ldots,\frac{\mu_K}{\mu(S)}\right)$. Therefore, $E(I_{nk})=\frac{\mu_k}{\mu(S)}$ and $\mu(S) I_{nk}$ is an unbiased estimator of $\mu_k$. Then,
	\begin{eqnarray}
	E_{|N|,I}[\hat{M}]&=&E_{|N|,I}\left[e^{-\mu(S)\lambda_{m}} \displaystyle\prod_{k=1}^{K} \left( \frac{\delta\lambda_{M} - \lambda_k}{\delta\lambda_{M} - \lambda_{m}}\right)^{|N_k|}  \right] \nonumber \\
	&=& E_{|N|,I} \left[e^{-\mu(S)\lambda_{m}} \displaystyle\displaystyle\prod_{n=1}^{|N|} \left( \frac{\delta\lambda_{M} - \sum_{k=1}^{K} I_{nk}\lambda_k}{\delta\lambda_{M} - \lambda_{m}}\right)    \right] \nonumber \\
	&=& e^{-\mu(S)\lambda_{m}} E_{|N|} \left[ \left(\frac{\mu(S)\delta\lambda_{M}-\sum_{k=1}^{K} \mu_k \lambda_k}{\mu(S)(\delta\lambda_{M} - \lambda_{m})}\right)^{|N|} \right] \nonumber \\
	&=& e^{-\mu(S)(\lambda_{m}+\delta\lambda_M-\lambda_m)} \sum_{j=0}^{\infty}  \frac{\left(\mu(S)\delta\lambda_{M}-\sum_{k=1}^{K} \mu_k \lambda_k\right)^j}{j!} \\ \nonumber
	&=& e^{-\sum_{k=1}^{K} \mu_k \lambda_k} = M.
	\end{eqnarray}

\subsection*{Proof of Proposition \ref{prop2}}

We shall compute the variance of $\hat{M_1}=\displaystyle\prod_{k=1}^{K} \left( \frac{\delta\lambda_{M} - \lambda_k}{\delta\lambda_{M} - \lambda_{m}}\right)^{|N_k|}$.
\\
We use the basic probability result that if $X\sim Poisson(\lambda)$, then $E[a^{nX}]=\exp\{-\lambda(1-a^n)\}$, $n\in\mathds{N}$, and the fact that $|N_k|\sim Poisson(\mu_k\lambda^*)$ to compute
$E[\hat{M_1}^2]$ and $E[\hat{M_1}]$ and obtain
	\begin{eqnarray}
	Var\left(\hat{M}_1\right) &=& \exp\left\{  -\sum_{k=1}^{K}   \mu_k(\delta\lambda_{M} - \lambda_m) \left[ 1-\left(\frac{\delta\lambda_{M} - \lambda_k}{\delta\lambda_{M} - \lambda_m}\right)^2\right]\right\} \nonumber\\
	&-&\exp\left\{  -2\sum_{k=1}^{K}   \mu_k(\lambda_{k} - \lambda_m) \right\}, \nonumber
	\end{eqnarray}
	implying that
	\begin{equation}
	Var(\hat{M})=\exp\left\{-2\mu(S)\lambda_{m}\right\} Var\left(\hat{M}_1\right). \nonumber
	\end{equation}
	Finally,
	\begin{equation}
	\frac{\partial Var(\hat{M}_1)}{\partial \delta}= \exp(\kappa)\left[ \sum_{k=1}^{K}\mu_k\lambda_M\left( \left( \frac{\delta\lambda_M-\lambda_k}{\delta\lambda_M-\lambda_m} \right)^2 -2(\delta\lambda_M-\lambda_k) -1 \right) \right]<0,	\nonumber
	\end{equation}
	where $\kappa\in\mathds{R}$.

\newpage

\section*{Appendix B - The MCMC algorithm}

\begin{algorithm}[!h]
\caption{MCMC for the level-set Cox process model}\label{alg_1}
\begin{algorithmic}[1]
\Require{$K$, $L$, $\delta$, $\tau^2$, and initial values for $N$, $\beta$, $\lambda$, $c$.}
\Ensure{MCMC posterior sample of $\theta$.}
\State Simulate $\beta_{\mathcal{S}}$ from the pCN proposal in (\ref{mod_rw_b}).
\State (In parallel) Simulate $\beta$ at $Y$ and $N$ from the pCN proposal, conditional on the $\beta_{\mathcal{S}}$ simulated on the previous step.
\State Accept the proposal w.p. given in (\ref{a.p.beta}).
\For{$l = 1 \to L$}  (In parallel)
\State For the $l$-th square, propose $N$ from the pseudo-marginal proposal $q(N^*)$ and accept w.p. given in (\ref{DTPMN2}).
\State Perform virtual update (delete all the values of $\beta$ in $S\setminus\{Y,N\}$).
\EndFor
\State \textbf{end for}
\State Propose $\lambda$ from a properly tuned Gaussian random walk.
\State If the proposed value of $\lambda$ yields a value $\ddot{\lambda}^*$ larger than the current one $\lambda^*$, simulate $N^*$ between the current and proposal values of $\lambda^*$ - simulate the number of locations from a $Poisson((\ddot{\lambda}^*-\lambda^*)\mu(S))$ and distribute them uniformly in the section of the cylinder.
\State  (In parallel) If extra locations of $N^*$ are simulated in the previous step, simulate $\beta$ at those locations from the NNGP prior conditional on its values at $\{Y,N\}$.
\State Accept the proposed value of $\lambda$ w.p. given in (\ref{a.p.lambda}).
\State Perform virtual update (delete all the values of $\beta$ in $S\setminus\{Y,N\}$).
\State Propose $c$ from a properly tuned uniform random walk and accept w.p. given in (\ref{a.p.c}).
\State If enough iterations of the MCMC have been performed, stop; otherwise, go back to 1.
\end{algorithmic}
\end{algorithm}

The virtual update is performed if the set of the unveiled locations of $\beta$ in $S\setminus\{Y,N\}$ is not empty. The tuning of the random walk proposal is performed in a pre-run of the algorithm.

\newpage

\section*{Appendix C - Plots}\label{secsim}

\begin{figure}[!h]
\centering
   \includegraphics[width=0.9\linewidth]{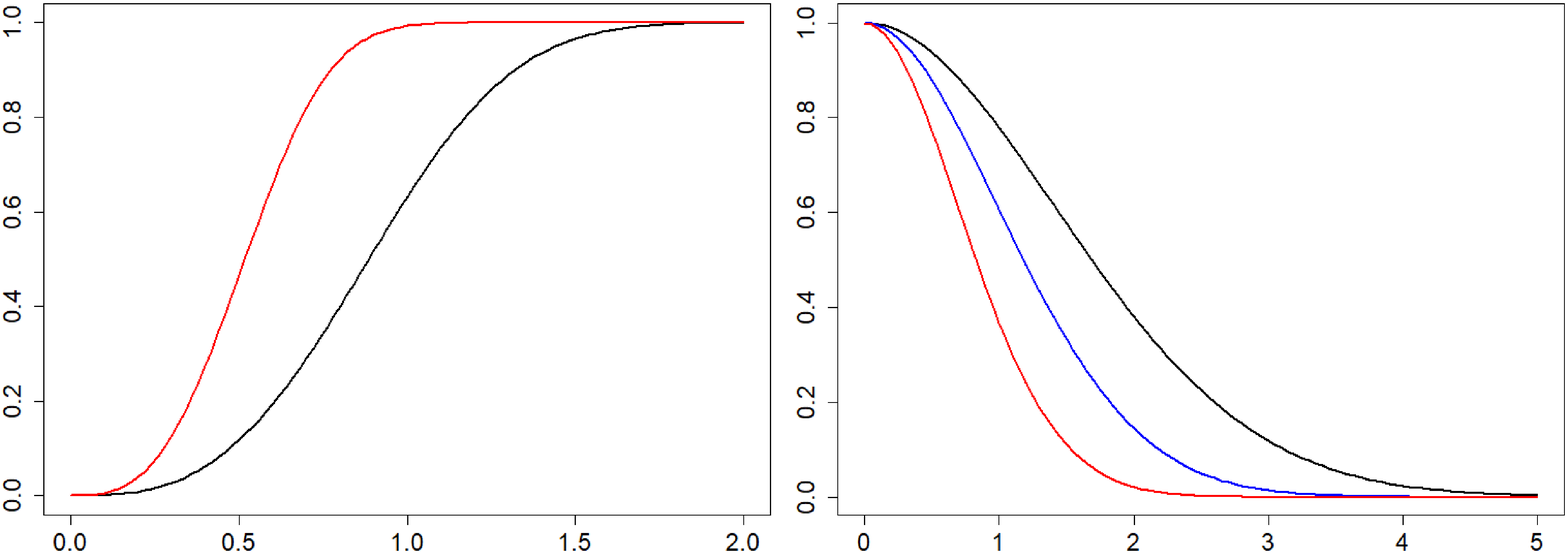}
\caption{Left: penalizing factor $r(x)=\left(1-\exp\left\{-\rho x^{\nu}\right\}\right)$ of the RG prior. Right: powered exponential covariance function with $\gamma=1.95$ and $\tau^2=0.4$ (red), 1 (blue) and 2 (black).}\label{figap1}
\end{figure}

\section*{Appendix D - Further results from the simulations}\label{secsim}

Figure \ref{figap6} shows the trace plots and ACF plots of the pseudo-marginal likelihood function for the examples in Section \ref{subsecsens}.

\begin{figure}[!h]
\centering
   \includegraphics[width=1\linewidth]{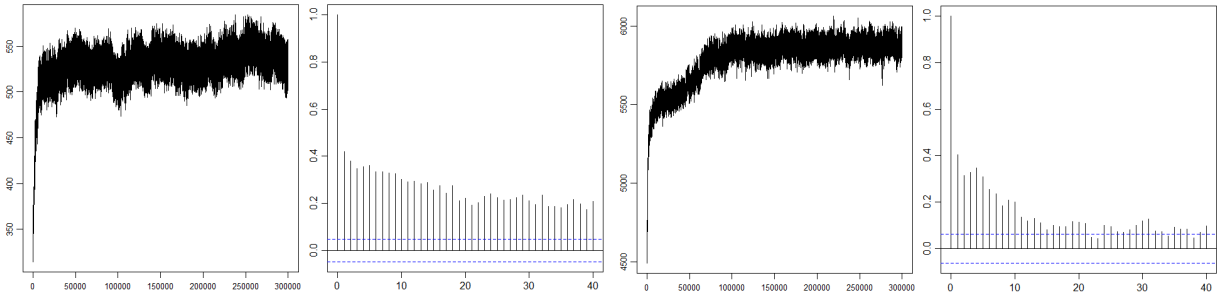}
\caption{Trace plots and ACF plots of the log pseudo-marginal likelihood function for one replication of example 1 (left) and one for example 2 (right). The ACF plots are based on a sub-sample of the chain with a lag of 100.}\label{figap6}
\end{figure}

Figure \ref{figsap4b} presents the true and estimated IF for example 3. Figures \ref{figsap7} and \ref{figsap8} show the estimated IF for the remaining 9 replications of examples 1 and 2, respectively, from Section \ref{subsecsens}.

\begin{figure}[h!]
	\centering
		\includegraphics[width=0.6\linewidth]{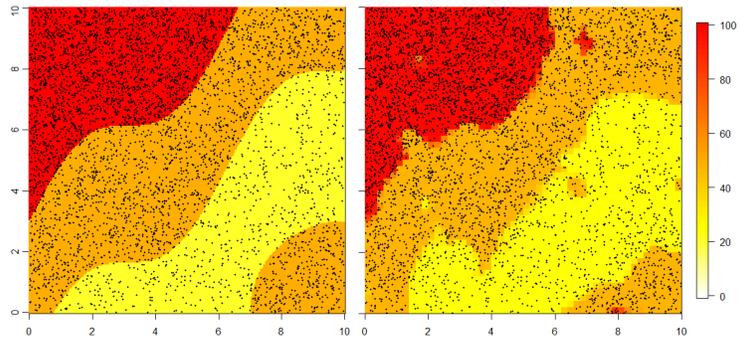}
		\caption{True and posterior mean of the IF of example 3.}
		\label{figsap4b}
\end{figure}

\begin{figure}[h!]
	\centering
		\includegraphics[width=0.8\linewidth]{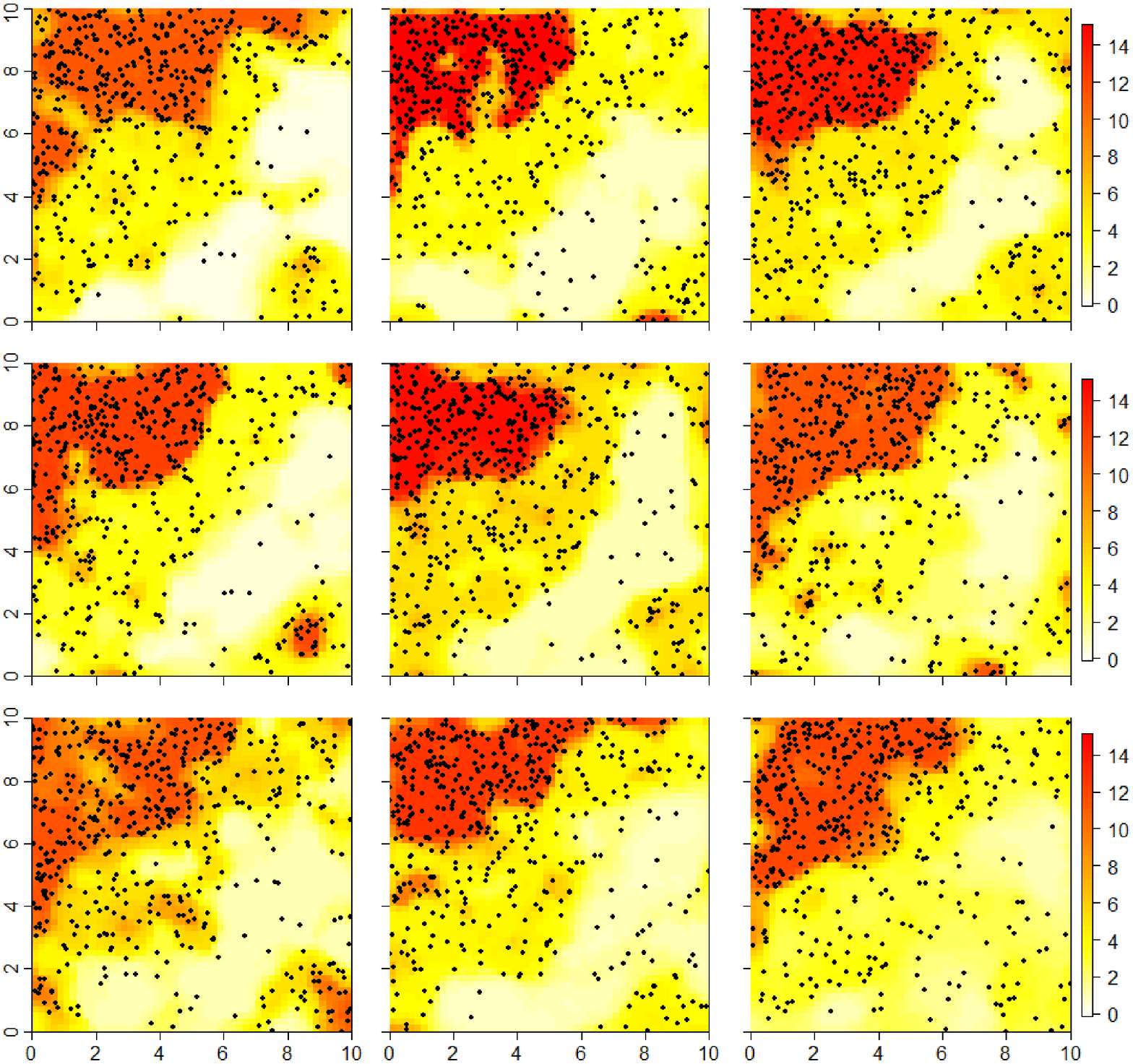}
		\caption{Posterior mean of the IF of the replications of example 1.}
		\label{figsap7}
\end{figure}

\begin{figure}[h!]
	\centering
		\includegraphics[width=0.8\linewidth]{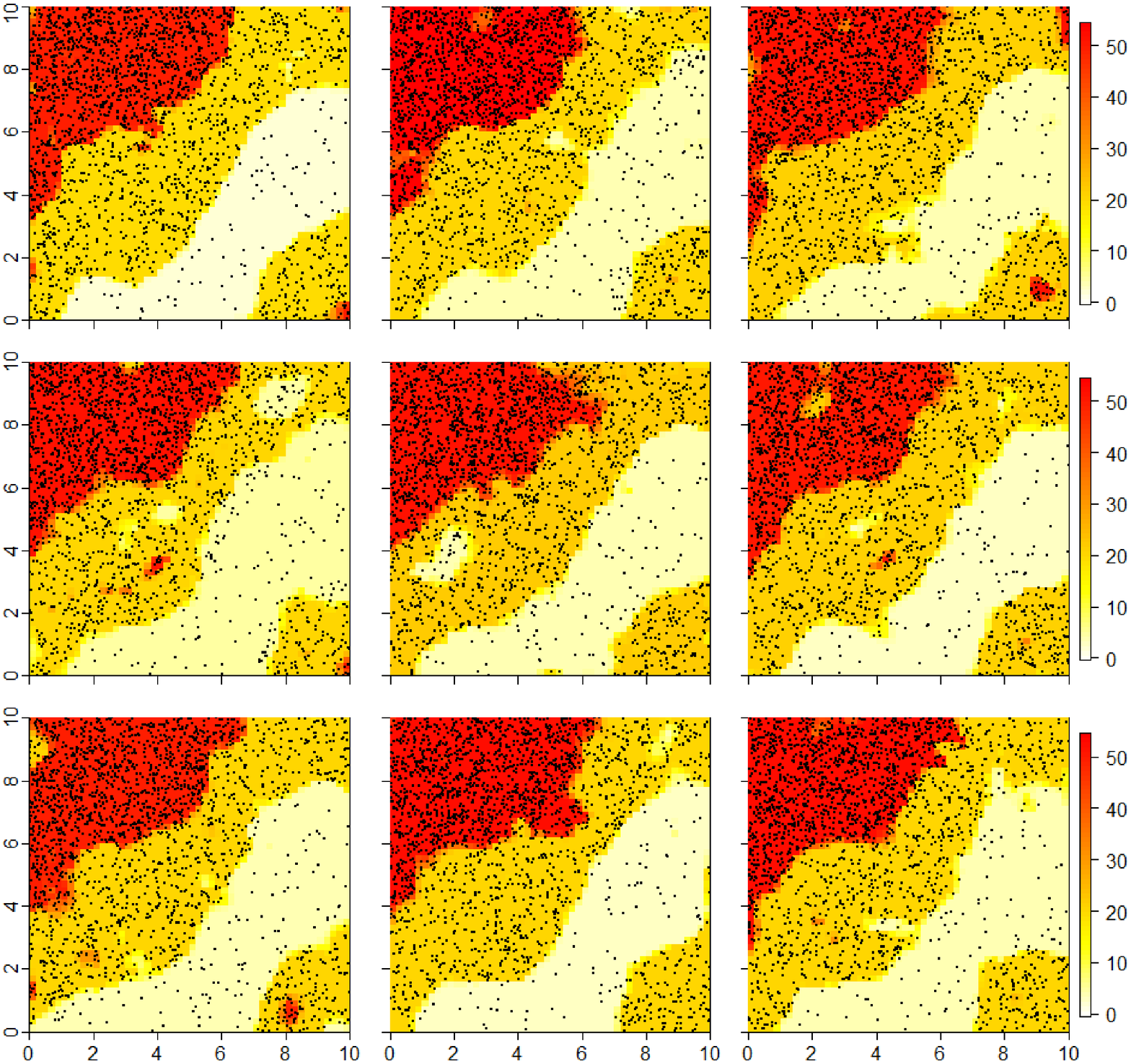}
		\caption{Posterior mean of the IF of the replications of example 2.}
		\label{figsap8}
\end{figure}

Figure \ref{figsap10} shows the estimated IF obtained by applying the methodology from \cite{geng21} for three levels of discretization.

\begin{figure}[h!]
	\centering
		\includegraphics[width=0.8\linewidth]{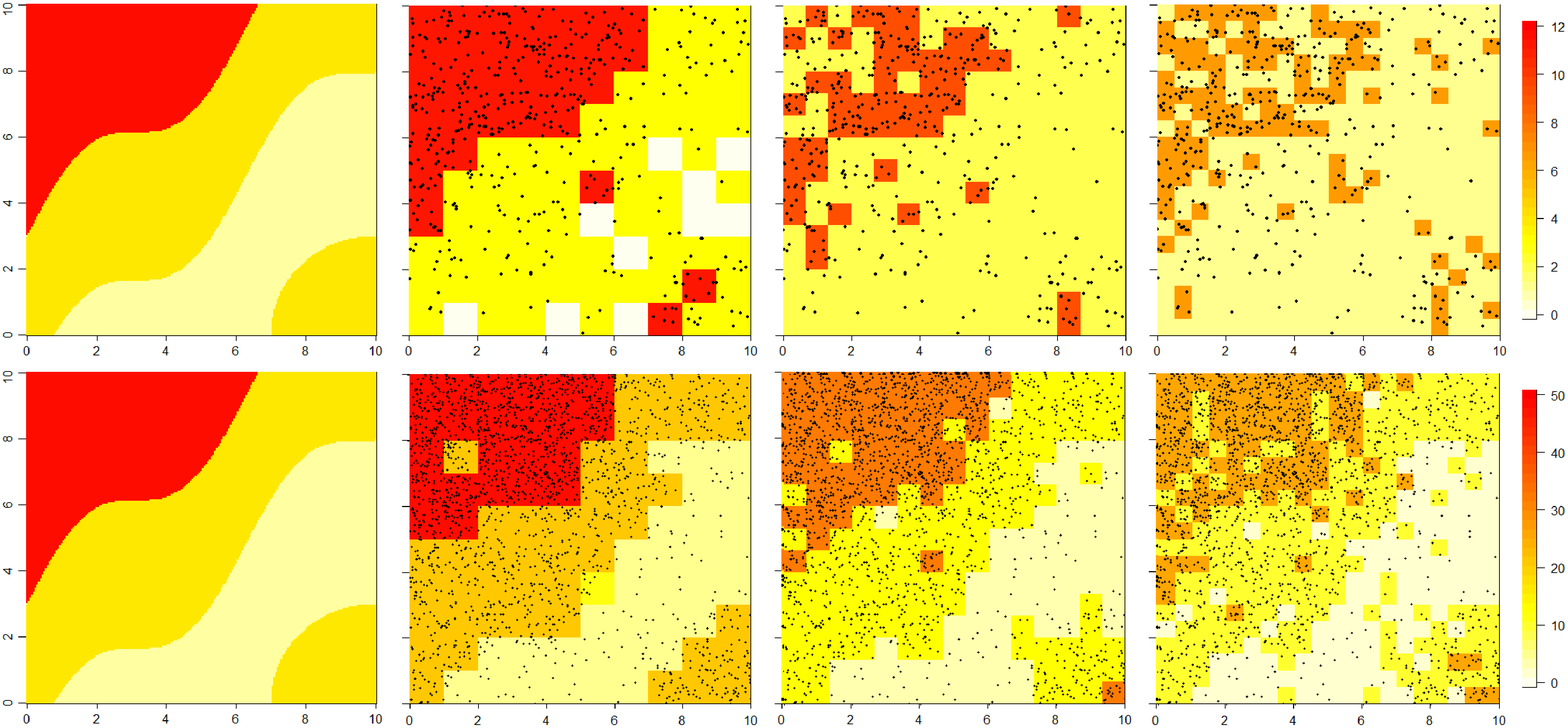}
		\caption{Top: Posterior mean of the IF for example 2 obtained with \cite{geng21}' methodology for discretizations 10x10, 15x15, 20x20. Bottom: Posterior mean of the IF for the New Brunswick fines example obtained \cite{geng21}'s methodology for discretizations 10x10, 15x15, 20x20.}
		\label{figsap10}
\end{figure}

\end{document}